\documentclass[11pt]{article}
\usepackage[utf8]{inputenc}

\usepackage{amsmath, amsthm}
\usepackage{thmtools}
\usepackage{thm-restate}

\usepackage[margin=3.5cm]{geometry}

\usepackage{amssymb}

\newtheorem{theorem}{Theorem}

\newtheorem{lemma}[theorem]{Lemma}
\newtheorem{claim}[theorem]{Claim}
\newtheorem{definition}[theorem]{Definition}
\newtheorem{proposition}[theorem]{Proposition}

\newtheorem{question}[theorem]{Question}
\newtheorem{assumption}[theorem]{Assumption}

\newcommand{\items}{\mathcal{M}} 
\newcommand{\agents}{\mathcal{N}}
\newcommand{\otilde}{\tilde{O}}

\title{On MMS allocations with few items}

\author{Uriel Feige\thanks{Weizmann Institute, Israel. {\tt uriel.feige@weizmann.ac.il}}}

\begin{document}

\maketitle

\begin{abstract}
    We consider the problem of allocating $m$ indivisible items to $n$ agents. For $n \ge 2$, we define $\mu(n)$ as the largest $m$ for which every allocation instance with $n$ agents and $m$ items has an allocation that gives each agent at least her maximin share (MMS). We prove that for general monotone valuations, $\mu(n) = n + \Theta(\log n)$. This holds both for goods and for chores. For arbitrary valuations (not monotone), we prove that $\mu(n) = \log n + O(1)$. 
\end{abstract}




\section{Introduction}

We consider the problem of allocating a set $\items$ of $m \ge 1$ indivisible items to a set $\agents$ of $n \ge 2$ agents. Each agent $i$ has a valuation function $v_i: 2^{\items} \rightarrow \cal{R}$, where $v_i(\emptyset) = 0$. An allocation $A = (A_1, \ldots, A_n)$ is a partition of $\items$ into $n$ disjoint sets, where for each $i \in [n]$, set $A_i$ is given to agent $i$. We wish the allocation to be ``fair". There are several notions of fairness, some of which are {\em feasible} (an allocation satisfying the fairness notion exists in every allocation instance), and some are not (some allocation instances do not have an allocation satisfying the fairness notion). For fairness notions that are not feasible, it is desirable to identify restrictions on the allocation instance that make it feasible. Our work contributes to this line of research.

We are interested in a fairness notion that is known as the {\em maximin share} (MMS)~\cite{budish2011combinatorial}, defined as:

$$MMS(\items,v_i,n) = \max_{P = (P_1, \ldots, P_n)}\min_j v_i(P_j)$$

Here, $P$ ranges over all partitions of $\items$. A maximizing partition $P$ is referred to as an {\em MMS partition}. {A bundle $B$ is {\em acceptable} for agent $i$ if $v_i(B) \ge MMS(\items,v_i,n)$. An {\em MMS allocation} is an allocation $A = (A_1, \ldots, A_n)$ in which for every agent $i$, bundle $A_i$ is acceptable for agent $i$.}

MMS allocations are not guaranteed to exist. A well known negative example has two agents and four items, $e_{11}, e_{12}, e_{21}, e_{22}$ arranged in a two by two matrix, see Table~\ref{tab:negative}. 
\begin{table}[hbt!]
        \centering
\begin{tabular}{|l|l|}
 \hline
 $e_{11}$                  & $e_{12}$  \\
 \hline
$e_{21}$               &   $e_{22}$ \\
 \hline
\end{tabular}
        \caption{Every non-empty bundle of items has value~1, except that for one agent, any bundle that contains two items from the same row has value~2, and for the other agent,  any bundle that contains two items from the same column has value~2.}
        \label{tab:negative}
\end{table}
The MMS partition of the first agent is into the two rows and of the other agent is into the two columns. For both agents, the MMS value is~2, but in every allocation, some agent gets a value of at most~1. 

Define $\mu(n)$ as a function that for each value of $n$ returns the largest value of $m$ for which every allocation instance with $n$ agents and at most $m$ items has an MMS allocation. In this paper, we study the asymptotic growth rate of $\mu(n)$. We do so for three general classes of valuation functions.

\begin{itemize}   
    \item {\em Monotone valuations over goods.} The term {\em goods} refers to items that have non-negative marginal value. Valuations over goods are monotone, in the sense that $v_i(T) \ge v_i(S)$ for every $S \subset T$. {Every bundle in the MMS partition of agent $i$ is acceptable for $i$. For goods, also every bundle that contains a bundle from $i$'s MMS partition is acceptable for $i$. Hence,} the worst case to consider is when every agent $i$ has a unique MMS partition $P^i = (P^i_1, \ldots, P^i_n)$, and the only bundles acceptable for $i$ are those that contain some $P^i_j$ in the MMS partition.
    All parts of the MMS partition are non-empty (as otherwise the MMS is~0, and the agent can be discarded without receiving any item). Consequently, $\mu(n) \ge n$. An MMS allocation is one in which each agent gets a set that contains (or is equal to) one of the parts in her MMS partition.

    \item {\em Monotone valuations over chores.} The term {\em chores} refers to items that have non-positive marginal value. Valuations over chores are monotone, in the sense that $v_i(T) \le v_i(S)$ for every $S \subset T$. In this case, the worst case to consider is when every agent $i$ has a unique MMS partition $P^i = (P^i_1, \ldots, P^i_n)$, 
    and the only bundles acceptable for $i$ are those that are contained in some $P^i_j$ in the MMS partition. 
    Also for chores, $\mu(n) \ge n$, because when $m \le n$, giving each agent at most one chore is an MMS allocation. For $m \ge n$, we may assume that all parts of the MMS partition are non-empty, as otherwise, a chore can be moved from a part that contains more than one chore into an empty part, without decreasing the value of the worst part. An MMS allocation is one in which each agent gets a set that is contained in (or is equal to) one of the parts in her MMS partition.

    \item {\em Arbitrary valuations}. In this case, the worst case to consider is when every agent $i$ has a unique MMS partition $P^i = (P^i_1, \ldots, P^i_n)$, with $v_i(P^i_j) = 0$ for every $j\in [n]$, and $v_i(S) < 0$ for every other non-empty set. Importantly, some parts of the MMS partition may be empty (and will be empty, as it will turn out that $\mu(n) < n$ for sufficiently large $n$). An MMS allocation is one in which each agent gets one of the parts in her MMS partition (that part may be empty).

\end{itemize}

The value of $\mu(n)$ (with different terminology) was previously studied in the special case of additive valuations (valuations satisfying $v(S) = \sum_{e\in S} v(\{e\})$). When $n=2$, MMS allocations always exist, by the {\em cut and choose procedure.} However, for every $n \ge 3$, there are examples showing that MMS allocations do not exist~\cite{kurokawa2018fair}. Both for additive valuations over goods and for additive valuations over chores it was shown that $\mu(n) \ge n + \Omega(\frac{\log n}{\log\log n})$~\cite{hummel2023lowerboundsmaximinshare}. 

Our main result is the following.

\begin{theorem}
\label{thm:main}
    Asymptotically, as $n$ grows, the following bounds hold. 
    \begin{enumerate}
        \item For monotone valuations over goods, $\mu(n) = n + \Theta(\log n)$.
        \item For monotone valuations over chores, $\mu(n) = n + \Theta(\log n)$.
        \item For arbitrary valuations, $\log n - O(1) \le \mu(n) \le \log n + O(1)$. 
    \end{enumerate} 
\end{theorem}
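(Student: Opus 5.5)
We split the proof into six parts: an upper and a lower bound for each of the three valuation classes. By the reductions in the introduction, each instance is purely combinatorial. Agent $i$ has a partition $P^i$ of $\items$ into $n$ parts. For goods she must receive a superset of some part; for chores a subset of some part; for arbitrary valuations exactly one part, possibly empty.

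\textbf{Lower bounds, via a random construction.} For arbitrary valuations, take $m = \log n + c$. Each agent's partition puts every item into a uniformly random one of her $n$ parts, independently. An MMS allocation has a rigid shape: a set of agents receives nonempty parts whose union is $\items$, and every other agent receives one of her empty parts. Fix an ordered set partition $(B_1,\dots,B_k)$ of $\items$ and an injective choice of agents for the blocks. The probability that agent $j$ has $B_j$ as an exact part is about $n^{-(|B_j|-1)}$. Multiplying over blocks gives roughly $n^{-(m-k)}$, times the number of agent choices, about $n^k$. Summing over the at most $m^m$ set partitions, the expected number of valid allocations is small once $2^m \gg n$ is taken into account. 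The remaining agents almost surely have empty parts, since $m \ll n$.

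I expect this count to be the main obstacle. The exponents must be balanced so that the threshold lands at $\log n$ and not $\log n/\log\log n$. The right model is probably not uniformly random: instead, each agent partitions $\items$ into a few nonempty parts, say a random bipartition, and all her other parts are empty. Then the constraint becomes a hitting/covering condition. One asks whether some assignment of distinct agents to blocks reproduces a common refinement; with $n$ random bipartitions on $m$ items, about $2^m$ of them are needed before two of them can cover consistently. I would adjust the model until the union bound closes at $m = \log n + O(1)$.

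For goods (and, symmetrically, chores) with $m = n + C\log n$, I would make most agents' parts singletons and spread the extra $C\log n$ items in a Table~\ref{tab:negative}-style conflicting way among the first $O(\log n)$ agents. Alternatively, I would take the structure from the arbitrary-valuation construction and pad it: apply that construction on $k = C\log n$ "extra" items, and add $n-k'$ dummy singleton items that each agent must receive.

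\textbf{Upper bounds, via existence arguments.} For arbitrary valuations with $m \le \log n - c$, there are at most $2^m \le n/2^c$ distinct subsets of $\items$. Each agent has at least $n - m$ empty parts, and her nonempty parts form a set partition of $\items$. Since the number of agents exceeds the number of possible set partitions (at most $m^m$ for $m \le \log n$), pigeonhole forces two agents to share a set partition — but they need not share blocks as parts in a usable way. A better argument is this: pick one agent, give her the part containing all of $\items$ if possible. In general, choose an agent $a$ and realize her nonempty parts by distinct agents. To do this, for each block $B$ of $a$'s partition we need another agent having $B$ as a part, or we let $a$ take one block and recurse on the rest. Here the pigeonhole over the fewer than $n$ possible subsets, combined with induction on $m$, should yield a valid allocation when $2^m \le n/2^c$.

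For goods with $m = n + c\log n$, let $t$ be the number of agents whose partition has a part of size $\ge 2$. Then $\sum(|P^i_j|-1) = m-n$ for every agent. We use Hall's theorem on a bipartite graph of agents versus items. Agents whose parts contain many singletons can be matched to single items. Only the $c\log n$ surplus items create conflicts, and they reduce to an arbitrary-valuation-type instance on $O(\log n)$ items, where the previous argument applies. Chores are handled analogously, with subsets in place of supersets.
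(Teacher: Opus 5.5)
There are genuine gaps throughout; as written, none of the six bounds is actually established. (Also, your labels are swapped: instances with no MMS allocation give \emph{upper} bounds on $\mu(n)$.)

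\textbf{The counterexamples.} The random models you propose cannot reach $\log n+O(1)$ for arbitrary valuations.
Uniformly random parts are trivially allocatable, because with $m=\log n+O(1)$ almost every agent's partition consists of $m$ singletons.
Random bipartitions fail too. Whenever $A_1\sqcup A_2\sqcup A_3=\items$, three agents holding $\{A_t,\items\setminus A_t\}$ form an allocation. For $n=2^{m-O(1)}$ the expected number of such configurations is about $3^m n^3 2^{-3m}=3^m2^{-O(1)}$. A first-moment count only closes once $m\gtrsim 2.1\log n$.

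The missing idea is a structured intersecting family. Fix a special item $e$, and create one agent per bipartition $(S_i,T_i)$ with $e\in T_i$ and $|S_i|\ge |T_i|$. Any two $S$'s intersect by size, and any two $T$'s intersect at $e$. So an allocation uses at most one $S$ and one $T$, which would have to be complementary and hence belong to the same agent.

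For goods, padding this construction with dummy singletons fails, because supersets are acceptable. If $T_j\subsetneq T_i$, agent $i$ takes $S_i$, agent $j$ takes $T_i\supseteq T_j$, and everyone else takes a dummy. You need an antichain. The paper gives each agent a distinct \emph{balanced} bipartition of $c+2$ small items, plus $n-2$ common large singletons, with $n\approx\frac12\binom{c+2}{(c+2)/2}$. Then two agents get no large item and would need complementary halves, which is impossible. Your ``Table~\ref{tab:negative}-style'' alternative is never specified.

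\textbf{The positive results.} For arbitrary valuations, your pigeonhole claim is false: for $m\approx\log n$ there are $2^{\Theta(m\log m)}=n^{\Theta(\log\log n)}\gg n$ set partitions. ``Recurse on the rest'' also breaks, since the other agents' partitions are partitions of all of $\items$ and need not have the removed block as a part.

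The missing argument goes as follows. Order each partition's parts by non-increasing size, and call a set $j$-popular if it is a part of at least $j$ partitions. A partition whose $j$-th part is $j$-popular for every $j$ can be allocated greedily in increasing $j$. Otherwise, charge it to some $j\ge 2$ whose $j$-th part is $j$-unpopular. That part has size at most $m/j$, so if no MMS allocation exists then
$n\le\sum_{j\ge2}(j-1)\sum_{i\le m/j}\binom mi=O(2^m)$.

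For goods, ``only the surplus items create conflicts'' hides the real work. Different agents have different singleton sets, and you cannot know in advance which agents the small instance will consume. The paper deletes the $k\le 5c$ items of degree below $(n+4c)/2$ in the agent--large-item graph. Via Lemma~\ref{lem:matching} it shows that \emph{every} set of $n+c-k$ agents can be perfectly matched into the remaining items. It also uses $MMS(K,v_a,k-c)\ge MMS(\items,v_a,n)$. Only then does it apply the count on at most $2c$ items, which needs just $2^{O(c)}$ agents.

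For chores this does not go through ``analogously.'' Every chore must be assigned, and parts straddling $K$ and the rest mean an agent's partition need not restrict to a $(k-c)$-partition of $K$. The paper instead repeatedly hands out the whole part containing a ``sometimes-small'' chore, keeping a $(1-\gamma)$ fraction of agents active, before counting.
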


The constants hidden by the $\Theta$ notation are not larger than $1 + o(1)$, not smaller than $0.93 - o(1)$ for goods, and not smaller than $0.52 - o(1)$ for chores. The reason why our constant for goods is better than the one for chores is at least partly due to the fact that we spent more effort in trying to optimize the first constant.


\subsection{Related work} 

Known bounds on $\mu(n)$ for the special case of additive valuations over goods appear in Table~\ref{tab:MMS}. {(The bounds from~\cite{ezra2026} are very recent and were not known at the time when the results of the current paper were obtained.)} For large $n$, the lower bound in the first part of our Theorem~\ref{thm:main}, which applies to all monotone valuations over goods, is stronger than the previously known lower bound for this special case. (For additive valuations, we slightly improve it further in Theorem~\ref{thm:additive}.)

\begin{table}[hbt!]
        \centering
\renewcommand{\arraystretch}{1.4}%
\begin{tabular}{|l||l|l|}
 \hline
 \multicolumn{3}{|c|}{$\mu(n)$ for additive valuations over goods} \\
 \hline
  & lower bound & upper bound  \\
 \hline\hline
 2 agents   & always feasible                  & always feasible  \\
 \hline
 3 agents  & 8  \; \; \cite{FST21Negative}               & 8  \; \; \cite{FST21Negative} \\
 \hline
 4 agents         & 10  \; \; \cite{hummel2023lowerboundsmaximinshare}                        & 10 \; \; \cite{ezra2026} \\
 \hline
 $n$ agents & $n + \Omega\!\left(\frac{\log n}{\log\log n}\right)$ \; \; \cite{hummel2023lowerboundsmaximinshare} & $n + O(\log n)$ \; \; \cite{ezra2026} \\
 \hline
\end{tabular}
        \caption{
        The lower bound of~\cite{hummel2023lowerboundsmaximinshare} states that for integer $c$, if $n \ge 0.6597^c \cdot c!$, then every instance with $n$ agents and at most $n+c$ items has an MMS allocation. }
        \label{tab:MMS}
\end{table}

The MMS is a {\em share-based} fairness notion. A weaker version of the MMS, referred to as the {\em residual maximin share} (RMMS) is feasible for all monotone valuations over goods~\cite{Feige25RMMS}.

For {\em comparison-based} notions of fairness, it is known that allocations that are {\em envy free up to one good} (EF1) exist for all monotone valuations and any number of items~\cite{LiptonMMS04}, and this extends also to {\em envy free up to one less preferred good} (EFL). Moreover, allocations that are simultaneously EFL and RMMS exist for all monotone valuations over goods~\cite{Feige25RMMS}. In contrast, for every $n\ge 3$ and $m \ge n+5$, there are allocation instances with submodular valuations in which there is no allocation that is {\em envy free up to any good} (EFX)~\cite{AkramicounterEFX, mackenzie2026counterexample}. For additive valuations, there is no known allocation instance for which EFX allocations do not exist, and it is known that EFX allocations exist when $n=3$~\cite{ChaudhuryGM24EFX}. 

For some restricted classes of valuation functions, many fairness notions are feasible. See for example~\cite{BEF21}.

As MMS allocations do not always exist, there is extensive literature concerning existence of allocations that approximate the MMS in various senses, and for various classes of valuation functions. See for example~\cite{kurokawa2018fair, GhodsiHSSY22BeyondAdditive, HosseiniOrdinal}. For more information on fair division of indivisible items, see for example~\cite{amanatidis2023Survey}. 

The problems studied in this work correspond to natural extremal problems concerning {\em rainbow matchings} in hypergraphs. See more details in Section~\ref{sec:rainbow} in the appendix. 

\subsection{Notation and preliminaries}


For a function $f(x)$, we use the notation $f(x) \le \otilde(g(x))$ if there is some function $p(x)$ that is polynomial in $x$ such that $f(x) \le p(x)(1 + g(x))$ for every $x$. If the growth rate of $f(x)$ is exponential in $x$, then the term $p(x)$ is a low order multiplicative factor.

A {\em hypergraph} $G(V,E)$ has a set $V$ of vertices, and a set $E$ of hyperedges, where each hyperedge is a subset of at least two vertices. (If every hyperedge contains exactly two vertices, then the hypergraph is simply a graph.) A {\em matching} in a hypergraph is a collection of disjoint hyperedges. An {\em independent set} in a hypergraph is a set of vertices that does not contain any hyperedge.

We shall make use of the following known fact about matchings in graphs. For completeness, we sketch its proof.

\begin{lemma}
\label{lem:matching}
    Every $n$ by $n$ bipartite graph with minimum degree at least $\frac{n}{2}$ has a perfect matching.
\end{lemma}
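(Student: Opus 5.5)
We will derive the statement from Hall's theorem. Let the two sides be $L$ and $R$, each of size $n$, and suppose every vertex has degree at least $\frac{n}{2}$. It suffices to verify Hall's condition on side $L$: every set $S \subseteq L$ must have a neighborhood $N(S) \subseteq R$ with $|N(S)| \ge |S|$. We split into cases according to the size of $S$, and the threshold is $\frac{n}{2}$.

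\emph{Case $|S| \le \frac{n}{2}$.} If $S$ is empty there is nothing to prove. Otherwise, take any vertex $u \in S$. Its neighborhood lies inside $N(S)$, so $|N(S)| \ge \deg(u) \ge \frac{n}{2} \ge |S|$.

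\emph{Case $|S| > \frac{n}{2}$.} Here we show that $N(S) = R$, which gives $|N(S)| = n \ge |S|$. Fix any vertex $w \in R$. It has at least $\frac{n}{2}$ neighbors in $L$, so at most $n - \frac{n}{2} = \frac{n}{2}$ vertices of $L$ are non-neighbors of $w$. Since $|S| > \frac{n}{2}$, the set $S$ cannot consist only of non-neighbors of $w$. Hence some vertex of $S$ is adjacent to $w$, so $w \in N(S)$.

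The two cases together establish Hall's condition, and Hall's theorem then yields a perfect matching. The only step needing any care is the second case. There the minimum degree bound is used on the $R$ side, and the strict inequality $|S| > \frac{n}{2}$ against the at most $\frac{n}{2}$ non-neighbors makes the argument work for both even and odd $n$. As an alternative, one could argue by augmenting paths: if a maximum matching missed some $u \in L$ and $w \in R$, the degree bounds would force an augmenting path of length~3. We would present the Hall-based argument, since it is shorter.
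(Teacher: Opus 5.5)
Your proof is correct and follows the same route as the paper's sketch: verify Hall's condition using a single vertex of $S$ when $0<|S|\le \frac{n}{2}$, and use the degree bound on the other side to show $N(S)$ is everything when $|S|>\frac{n}{2}$. Your handling of the strict inequality in the second case, which covers odd $n$, is exactly what the paper's argument relies on as well.
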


\begin{proof}[sketch]
By Hall's condition, the only obstruction for a perfect matching is a set $S$ of vertices on one side that has fewer than $|S|$ neighbors on the other side. Under the minimum degree condition, in $0 < |S| \le \frac{n}{2}$ then even just a single vertex in $S$ has at least $|S|$ neighbors, whereas if $|S| > \frac{n}{2}$ then every vertex on the other side has at least one neighbor in $S$, and so the neighborhood of $S$ has size $n \ge |S|$. 
\end{proof}

A set function $f$ is submodular if it has the diminishing marginal value property: for every item $e$ and sets $S \subset T$, $f(S \cup \{e\}) - f(S) \ge f(T \cup \{e\}) - f(T)$.

For chores, instead of considering the non-positive valuation function $v_i$, we shall consider the non-negative cost function $|v_i| = -v_i$. Relative to $|v_i|$, in an MMS allocation agent $i$ needs to get a bundle of cost at most her MMS (which corresponds to value at least her MMS).

All logarithms are in base~2, unless stated otherwise.

\section{Proofs for the main theorem}

We prove the parts of Theorem~\ref{thm:main} in an order that introduces the ideas in the proofs in a gradual manner.

\subsection{Arbitrary valuations}

We first prove an upper bound on $\mu(n)$ that is applicable for arbitrary valuations. {We remark that a bound of $\mu(n) \le 2\log n$ can be inferred from Lemma 1.10 in~\cite{AharoniCH15}, but we are interested in a stronger upper bound.}  

\begin{theorem}
\label{thm:arbitraryNo}
    For every $n$, there is an instance with arbitrary valuations and not more than $\log n + O(1)$ items that has no MMS allocation.  
\end{theorem}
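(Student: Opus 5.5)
The plan is a deterministic construction in which every agent's MMS partition has exactly two non-empty parts $A_i$ and $\items\setminus A_i$, with the remaining $n-2$ parts empty. Following the reduction for arbitrary valuations described in the introduction, set $v_i(A_i)=v_i(\items\setminus A_i)=v_i(\emptyset)=0$ and $v_i(S)<0$ for every other set $S$ (the introduction writes $v_i(S)<0$ for every other non-empty set, but here $v_i(\items)$ must also be negative). Then $(A_i,\items\setminus A_i,\emptyset,\ldots,\emptyset)$ is the unique partition all of whose parts have value $\ge 0$, so $MMS(\items,v_i,n)=0$. Hence an MMS allocation is the same as a choice, for each agent, of one of $A_i$, $\items\setminus A_i$, $\emptyset$, such that the chosen sets are disjoint and cover $\items$. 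Agents that receive $\emptyset$ are irrelevant. So it suffices to find $n$ splits $\{A_i,\items\setminus A_i\}$ such that $\items$ is never the disjoint union of non-empty sides taken from pairwise distinct agents.

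Fix an item $x$, let $\items'=\items\setminus\{x\}$ with $|\items'|=m-1$, and give each agent a distinct set $B_i\subseteq\items'$, putting $A_i=\items\setminus B_i$ (the side containing $x$). Suppose an allocation exists. Exactly one chosen block contains $x$, say $A_{i_0}=\items\setminus B_{i_0}$. Every other chosen block is some $B_j$ with $j\ne i_0$, and these blocks must partition $B_{i_0}$. I will rule out each possibility as follows.
\begin{enumerate}
\item If no other blocks are used, then $B_{i_0}=\emptyset$. This is excluded by requiring every $B_i$ to be non-empty.
\item If exactly one other block $B_j$ is used, then $B_j=B_{i_0}$ with $j\ne i_0$. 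This contradicts the distinctness of the $B_i$.
\item If at least two blocks are used, then some $B_{i_0}$ is a disjoint union of at least two other members of the family. This is excluded by a size restriction.
\end{enumerate}
For the size restriction, take the family $\mathcal{B}$ of all subsets $B\subseteq\items'$ with $\frac{m-1}{3}<|B|\le\frac{2(m-1)}{3}$. A disjoint union of two or more members has size greater than $\frac{2(m-1)}{3}$, so it cannot itself be a member. Note that this argument also needs $n\ge 3$, so that every agent has an empty part available.

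It remains to count the agents. By standard binomial concentration, $|\mathcal{B}|\ge 2^{m-2}$ for all sufficiently large $m$. Given $n$, choose the smallest $m$ with $|\mathcal{B}|\ge n$, and assign $n$ distinct members of $\mathcal{B}$ to the agents; any subfamily of $\mathcal{B}$ still satisfies the three properties. This gives $m\le \log n+O(1)$ items and no MMS allocation, and a finite check (or a slightly larger constant) covers small $n$.

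The main obstacle was identifying the right combinatorial condition. A naive random choice of splits fails already at the two-block level: two agents with complementary sides occur unless $2^m\gg n^2$, which gives only $2\log n$. Anchoring all splits at the special item $x$ reduces the problem to a family of subsets of $\items'$ in which no member is partitioned by other members. The middle-layer band of sizes then achieves this while retaining a constant fraction of all $2^{m-1}$ subsets, and this is exactly what yields $\log n+O(1)$ rather than, say, $\log n+\frac12\log\log n$ from a single layer.
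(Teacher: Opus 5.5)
Your proof is correct and is essentially the paper's construction: every bipartition is anchored at a special item, only the two sides (and $\emptyset$) get value $0$, the side containing the special item is used exactly once, and the rest must then be tiled by other agents' non-special sides. The only difference is the family of non-special sides. The paper takes all sets of size at least $m/2$ avoiding the special item; these pairwise intersect, so at most one can be used, giving about $2^{m-2}$ agents. Your middle band instead rules out tilings by two or more members via sizes and keeps about $2^{m-1}$ sets, saving one item in the additive constant.
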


\begin{proof}
    Let $\items$ be a set of $m$ items. Designate one item $e$ as special. Consider all partitions $(S_i,T_i)$ of $\items$ into two disjoint parts, with $|S_i| \ge |T_i|$ and $e\in T_i$. There are at least {$(1 - o(1))2^{m-2}$} such partitions. With each such partition we associate an agent $i$. Hence the total number of agents $n$ is roughly $2^{m-2}$, and $m \le \log n + O(1)$.
    
 For every agent $i$ we define $v_i(S_i) = v_i(T_i) = 0$, and for every other non-empty set $S \subset \items$ (including $\items$ itself), we define $v_i(S) = -1$.

    For every agent $i$, her MMS value is $0$. There is a unique MMS partition that gives $i$ this value: $n-2$ empty parts, and the remaining two parts are $S_i$ and $T_i$.

    In this instance, there is no MMS allocation. In every allocation, there is at least one agent, say agent $i$, that receives a non-empty bundle. To get a value of $0$, agent $i$ must receive one of the sets $S_i$ or $T_i$. But then, there must be some additional agent $j$ that receives a non-empty bundle. It cannot be that $i$ receives $T_i$ and $j$ receives $T_j$, as these sets intersect at $e$. It cannot be that $i$ receives $S_i$ and $j$ receives $S_j$, as these sets also intersect (their union does not contain $e$, but the sum of their sizes is at least $m$). Thus, without loss of generality, $i$ receives $S_i$ and $j$ receives $T_j$. But as $T_j \not= T_i$, either $S_i$ and $T_j$ intersect, which is not allowed, or there still remains some unallocated item. But then, the agent $k \not\in \{i,j\}$ that receives items gets value $-1$. This is because $k$ cannot receive $S_k$ (it intersects with $S_i$), and cannot receive $T_k$ (it intersects with $T_j$).
\end{proof}

We now prove a nearly matching lower bound on $\mu(n)$.

\begin{theorem}
\label{thm:arbitraryYes}
    For some universal constant $c > 0$, for every $n$ and $m \le \log n - c$, every instance with arbitrary valuations has an MMS allocation.  
\end{theorem}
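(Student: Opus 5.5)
The first step is to reduce to a covering problem. We pass to the worst case described in the introduction. Each agent $i$ has a unique MMS partition $P^i=(P^i_1,\ldots,P^i_n)$, and the acceptable bundles for $i$ are exactly the parts of $P^i$, possibly empty. Since $m\le \log n - c < n$, every $P^i$ has at most $m$ non-empty parts, so every agent has at least one empty part. Hence agents who receive nothing are automatically satisfied. An MMS allocation therefore exists iff we can find distinct agents $i_1,\ldots,i_t$ and parts $Q_\ell\in P^{i_\ell}$ such that $Q_1,\ldots,Q_t$ partition $\items$. The goal is thus a ``rainbow'' partition of $\items$ built from parts of distinct agents' partitions, and the hypothesis to exploit is that $n\ge 2^{m+c}$.

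I would build this rainbow partition greedily while maintaining an invariant. The state is a set $R\subseteq\items$ of still-uncovered items, with $|R|=k$, and a set $A$ of ``live'' agents. Live agents are those whose partitions $P^i$ consist of the parts already chosen together with a partition of $R$, and who have not yet been used. Initially $R=\items$ and $A=\agents$. In each step I pick a set $Q\subseteq R$ and a live agent having $Q$ as a part, assign $Q$ to her, and remove her. The new live set consists of the remaining agents of $A$ that also have $Q$ as a part; their partitions restricted to $R\setminus Q$ are again partitions. I would aim for the invariant $|A|\ge \beta^{k}$ for a suitable base $\beta$, with $\beta$ ideally equal to $2$ up to lower-order terms. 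To choose $Q$, let $w(Q)$ be the number of live agents having $Q$ as a part. Double counting gives $\sum_{Q\subseteq R} |Q|\, w(Q) = k|A|$, since each live agent's partition of $R$ has part sizes summing to $k$. Thus if $w(Q) < \beta^{k-|Q|}+1$ for all $Q$, then $k|A| < \sum_j j\binom{k}{j}(\beta^{k-j}+1)$. This upper bound is $k(\beta+1)^{k-1}$ plus a $k2^{k-1}$ term, which contradicts $|A|\ge\beta^k$ once $\beta$ is large enough. So some $Q$ can be chosen preserving the invariant, and the process terminates with $R=\emptyset$.

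The main obstacle is the base. The naive averaging above only closes when $\beta^k$ beats $(\beta+1)^{k-1}$, which forces $n\ge C^m$ for some $C>2$ and gives only $\mu(n)\ge \log n/\log C$, not $\log n - O(1)$. To get base $2$ I would sharpen the choice of $Q$. First, I would always pick $Q$ among the parts containing a fixed uncovered item, or among the largest parts. Second, I would replace the uniform threshold by a weight adapted to the true counts: the number of partitions of $R$ in which a given set $Q$ is a part is $B_{k-|Q|}$, the Bell number, not $\beta^{k-|Q|}$. So I would try to show that some $Q$ satisfies $w(Q)/|A| \ge 2^{-|Q|}(1-o(1))$, using that a single agent's partition cannot have all its parts simultaneously rare. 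Concretely, I would compare with the extremal construction of Theorem~\ref{thm:arbitraryNo}, where all partitions have only two parts and the count is about $2^{m-2}$, and check whether two-part partitions are the worst case for the greedy. If a clean potential argument does not give base exactly $2$, the fallback is a two-phase argument. While many items remain, we use the greedy with a slightly larger base but only on large parts, which costs few agents per covered item. Once $O(1)$ items remain, we finish by brute force, since a constant number of items only needs a constant number of live agents. The $O(1)$ slack in $\log n - c$ should absorb the losses from both phases.
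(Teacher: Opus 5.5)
\textbf{There is a genuine gap: the live-set greedy cannot reach any exponential base.} Your reduction to a rainbow partition of $\items$ is correct: distinct agents each receive one of their own parts, and everyone else receives an empty part. The greedy you build on it, however, fails, and you mis-evaluate its averaging step.

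\textbf{The averaging does not close for any fixed $\beta$.} The condition $\beta^k>(\beta+1)^{k-1}$ is equivalent to $(1+1/\beta)^k<\beta+1$. This fails once $k$ exceeds roughly $\beta\ln\beta$. So no fixed base closes the argument, not even some $C>2$. The averaging works only if $\beta$ grows like $m/\ln m$, which gives $m$ of order $\log n/\log\log n$.

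\textbf{The invariant itself is too strong.} This is not a weakness of the counting. Requiring every later agent to share \emph{all} previously chosen parts is far too restrictive. Concretely, let the $n=S(m,3)\approx 3^m/6$ agents have as MMS partitions all the distinct $3$-partitions of $\items$, padded with empty parts.
\begin{itemize}
\item A set $Q$ is then a part for fewer than $2^{m-|Q|-1}$ agents. So $w(Q)/|A|$ is exponentially smaller than $2^{-|Q|}$, and your proposed sharpening is false.
\item The invariant $|A|\ge\beta^k$ cannot be maintained for $2\le\beta<3$. Using all $\ell$-partitions with $\ell>\beta$ defeats any base $\beta$.
\item Your process can never finish on this instance. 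The agent used in the last step must have $(Q_1,\ldots,Q_t)$ as her whole partition, which forces $t=3$. The agent used in step~2 then has $Q_1,Q_2$ and a partition of $Q_3$, i.e.\ the same $3$-partition, contradicting distinctness.
\end{itemize}
Yet an MMS allocation obviously exists here, since every part of every agent is held by many agents. The two-phase fallback does not help, because the first phase already gets stuck.

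\textbf{The missing idea is to use one agent's partition as a template.} Stop propagating a live set, and ask only that the template's parts be held by enough distinct agents. Let $(T_1,\ldots,T_d)$ be some agent's partition with parts sorted by non-increasing size. If each $T_j$ is a part for at least $j$ agents, assign $T_1,T_2,\ldots$ in this order. This always finds an unused agent, since only $j-1$ agents are used before step $j$, and nobody needs to share more than one part.

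Your remark that a single partition cannot have all its parts ``rare'' points in this direction. But ``rare'' must mean ``the $j$-th largest part is held by fewer than $j$ agents'', not a $2^{-|Q|}$ fraction.

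\textbf{How the count then goes.} If no MMS allocation exists, every partition has such a $j\ge 2$, and then $|T_j|\le m/j$. Charge the partition to the pair $(j,T_j)$; each set is charged at most $j-1$ times. This gives at most $(j-1)\sum_{i\le m/j}\binom{m}{i}$ partitions for each $j$. The term $j=2$ contributes at most $2^m$, and the terms $j\ge 3$ are exponentially smaller. Hence $n=O(2^m)$, which is exactly the paper's proof.
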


\begin{proof}
Consider an allocation instance with a set $\items$ of $m$ items and a set $\agents$ of $n > m$ agents. For every agent $i$, her MMS partition partitions $\items$ into $1 \le d_i \le m$ parts. Refer to the collection of partitions of the $n$ agents as $C$. 



{For the sake of contradiction, we assume that the allocation instance has no MMS allocation. Based on this assumption, we upper bound $n$. The assumption of having no MMS allocation is equivalent to the following assumption.}

\begin{assumption}
    \label{ass:N_d}
  For every integer $d$, there is no partition of $\items$ into $d$ non-empty parts $T_1, \ldots, T_d$ and a choice of $d$ different agents $a_1, \ldots, a_d$ in $\agents$, such that for every $1 \le i \le d$, $T_i$ is a part in the MMS partition of $a_i$.  
\end{assumption}

If Assumption~\ref{ass:N_d} fails to hold, then the instance has an MMS allocation (the remaining agents get empty bundles). 
Based on this assumption, we provide an upper bound on $n$. 

For an integer $j\ge 1$, call a set $S\subseteq \items$ \emph{$j$-popular}
if there are at least $j$ partitions in $C$ such that, in each of these
partitions, one part is $S$. Otherwise call $S$
\emph{$j$-unpopular}. We say that a partition
\[
Q=(T_1,\ldots,T_d)\in C
\]
is {\em good}, if after ordering its parts in non-increasing order of size, every
part $T_j$ is $j$-popular. Assumption~\ref{ass:N_d} implies that no partition $Q \in C$ is good. This is because for a good $Q$, visiting the indices $j$ in increasing order, we can allocate each $T_j$ to a different agent that desires it. 

Order the parts of every partition in $C$ in non-increasing order of size.
The first part is always $1$-popular, because it contains itself and hence is
witnessed by the partition in which it appears. Therefore, for every
partition in $C$, there exists some index $j\ge 2$ such that the $j$th part
is $j$-unpopular.

We now classify all partitions of $C$ into $d-1$ classes, $C_2, \ldots, C_d$. 


For any partition in $C$, let $j \ge 2$ be an arbitrary index such that its $j$th part is $j$-unpopular (such an index must exist), and place the partition in $C_j$.

Fix a partition
\[
Q=(T_1,\ldots,T_d)\in C_j.
\]

Since the parts are ordered in
non-increasing order, the first $j$ parts each have size at least
$|T_j|$. Consequently, $|T_j| \le \frac{m}{j}$. Define $s_j = \lfloor \frac{m}{j} \rfloor$.

There are at most $\sum_{i=1}^{s_j} \binom{m}{i}$ subsets of $\items$ of size at most $s_j$. {Each subset that is the $j$-th part of a partition in $C_j$ is $j$-unpopular. It appears as a part in fewer than $j$ partitions, and hence can be chosen at most $j-1$ times in the above selection.} Therefore, 
$$|C_j| \le (j-1)\sum_{i=1}^{s_j} \binom{m}{i} \le O(j \sqrt{m}) \cdot \binom{m}{s_j}.$$ 

It follows that 

\begin{equation}
\label{eq:N}
    n = \sum_{j=2}^m |C_j| \le \sum_{j=2}^m O(j \sqrt{m}) \cdot \binom{m}{s_j} \le O(2^m)
\end{equation}

(In fact, we have the slightly tighter bound of 
$N \le \sum_{j=2}^m (j-1) \cdot \sum_{i > s_{j-1}}^{s_j} \binom{m}{i}$.)

Thus, if $n \ge 2^{m + c}$ for some sufficiently large constant $c$, Assumption~\ref{ass:N_d} is violated, implying that an MMS allocation exists.
\end{proof}

\subsection{Negative examples for monotone valuations}

The upper bound on $\mu(n)$ that we prove for monotone valuations over goods applies already in the special case of submodular valuations. 

\begin{theorem}
\label{thm:submodular}
    For every $n$, there is an instance with submodular valuations and not more than $n + \log n + O(\log\log n)$ goods that has no MMS allocation. Moreover, the valuations can be chosen to be integer valued, with $v_i(\items) = O(\log n)$ for all agents. 
\end{theorem}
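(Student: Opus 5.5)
The plan is to reduce the problem to a combinatorial object: a \emph{rainbow-matching-free} family of partitions on a small set of special items. Then we pad with ordinary items and realize the acceptable sets with submodular valuations.

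Let $\items = O \cup K$, where $O$ consists of $n-t$ ordinary goods and $K$ of $k$ special goods. Every agent $i$'s MMS partition will consist of the $n-t$ singletons of $O$ together with a partition $Q^i = (Q^i_1,\ldots,Q^i_t)$ of $K$ into $t$ parts. Since only $n-t$ agents can be satisfied by singletons of $O$, any MMS allocation must contain $t$ agents $a_1,\ldots,a_t$ and pairwise disjoint sets $Q^{a_1}_{j_1},\ldots,Q^{a_t}_{j_t}\subseteq K$, one taken from each of these agents' partitions. This is a rainbow matching of size $t$. Leftover items are harmless for goods, so unlike in Theorem~\ref{thm:arbitraryNo} covering $K$ is not required, and this is why more items are needed. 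The combinatorial goal is therefore a family of $n$ partitions of a $k$-set into $t$ parts with no rainbow matching of size $t$, where $k - t \le \log n + O(\log\log n)$, so that $m = n - t + k$ is as claimed.

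For the construction, I would take $t = \Theta(\log n)$ and let each $Q^i$ consist of $t-1$ singletons of $K$ together with one large part. Since only $t$ disjoint nonempty sets in $K$ are needed, at most $t-1$ of the chosen sets can be singletons, forcing at least one chosen set to be a large part $L$ of size about $k-t+1$. At most $k-|L|$ items remain for the other $t-1$ sets, which must come from distinct agents. I would index the large parts by subsets of a ``core'' of size about $\log n + O(\log\log n)$ and arrange the singleton sets so that the complement of each large part cannot host $t-1$ rainbow singletons. Roughly, a singleton can be used only by agents listing it, and every agent must list the core items complementary to its own large part. The count of agents is then about $2^{k-t}$ times a polynomial factor, giving $k - t = \log n + O(\log\log n)$. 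If the explicit design fails, the fallback is to choose the $Q^i$ at random and apply a union bound over all candidate rainbow $t$-tuples. The main obstacle I expect is exactly this step: a case analysis on how many large parts a candidate matching uses, with the parameters tuned to lose only $O(\log\log n)$.

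For the valuations, given the partition $P^i$, I would set
\[
v_i(S) \;=\; \sum_{j} \min\bigl(|S\cap P^i_j|\cdot \lceil W/|P^i_j|\rceil,\ W\bigr),
\]
possibly with a further truncation $\min(\cdot, W)$ of the total. Here $W = \Theta(\log n)$ is an integer exceeding every part size. Each summand is a concave function of a modular function, so $v_i$ is monotone and submodular, and it is integer valued with $v_i(\items) = O(\log n)$ after the truncation. Each part of $P^i$ has value exactly $W$, so the MMS is at least $W$. I would then check two things. First, every partition into $n$ bundles has some bundle of value less than $W$ unless it equals $P^i$, so $P^i$ is the unique MMS partition. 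Second, a bundle has value at least $W$ only if it contains a part of $P^i$, or, for the large part, is large enough to be acceptable only in the situations already excluded by the combinatorial argument. This last check may force a small adjustment of the weights, which I expect to be routine.
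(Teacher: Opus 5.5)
\textbf{There is a genuine gap: the proposed construction always admits an MMS allocation.}

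Your reduction is sound, and it is the same skeleton the paper uses. There are $n-t$ ordinary goods that are singletons in every MMS partition, a small set $K$ that each agent splits into $t$ parts, and the requirement that no $t$ pairwise disjoint parts come from $t$ distinct agents. The problem is the design you put inside this skeleton.

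Your claim that ``at most $t-1$ of the chosen sets can be singletons'' is false. Singletons $\{e_1\},\ldots,\{e_t\}$ listed by $t$ different agents are disjoint and form a rainbow matching. In fact every family of the form you propose has an MMS allocation:
\begin{itemize}
\item Let $X_i$ be agent $i$'s set of $t-1$ singletons in $K$, and join $i$ to the $n-1$ items of $O\cup X_i$. Each of these items is a part of her MMS partition.
\item Every nonempty set of agents has at least $n-1$ neighbours.
\item All $n$ agents together have $n-t+|\bigcup_i X_i|\ge n$ neighbours, unless all $X_i$ coincide.
\item So by Hall's theorem each agent gets a single acceptable item. If instead all $X_i$ coincide, then all $Q^i$ are identical and an MMS allocation is trivial.
\end{itemize}
Hence a working design needs partitions of $K$ with at least two non-singleton parts. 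The random fallback does not fill this gap. No distribution is specified, and families with no rainbow $t$-matching are too structured for a union bound over random choices to be plausible.

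\textbf{Your valuation recipe also cannot support such a design.} If an agent has two non-singleton parts $P,P'$ of equal size $h$, your function gives every $h$-subset of $P\cup P'$ value at least $W$. So mixed bundles become acceptable. For balanced bipartitions of $K$, every half of $K$ would be acceptable to everyone.

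\textbf{What is missing} is the paper's choice $t=2$, together with a genuinely non-additive valuation on $K$.
\begin{itemize}
\item Take $|K|=c+2$ with $c$ even, and give the agents distinct unordered balanced bipartitions $(T_i,K\setminus T_i)$. There are $\frac12\binom{c+2}{(c+2)/2}$ of these, so $c=\log n+O(\log\log n)$ and $m=n+c$.
\item Use a common function in which ordinary items are worth $c+2$ and total value is capped at $c+2$. On $T\subseteq K$ set $v(T)=2|T|$ if $|T|\le c/2$, $v(T)=c+1$ if $|T|=(c+2)/2$, and $v(T)=c+2$ if $T$ is larger.
\item For agent $i$, raise only $T_i$ and $K\setminus T_i$ to $c+2$. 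The result is integral and submodular, and the MMS value is $c+2$.
\end{itemize}
The argument is then short. Only $n-2$ ordinary items exist, so two agents receive only items of $K$. Each needs at least $(c+2)/2$ of them, so they get complementary halves. Each half must be one of that agent's own two sets. Then both agents have the same bipartition, which is a contradiction.
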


\begin{proof}
    For even $c$, we choose $n = \frac{1}{2}\binom{c+2}{\frac{c+2}{2}}$.  Conversely, for a given $n$, we choose the smallest even $c$ satisfying $\frac{1}{2}\binom{c+2}{\frac{c+2}{2}} \ge n$. Thus, $c \le \log n + O(\log\log n)$. We let $m = n + c$.

    It suffices to present our construction in the case that $c$ is even.
    In our instance, there is a valuation function $v$, such that for every agent $i$, valuation $v_i$ agrees with $v$ on the values of all bundles, except for two bundles (whose identity depends on $i$). 
    
    We describe $v$ first. There is a set $S$ of $c+2$ ``small items". Every item not in $S$ is ``large" and has value $c+2$. Also, $v(\items) = c+2$. For any subset $T \subset S$, we have that: 

    \begin{itemize}
        \item $v(T) = 2|T|$ if $|T| \le \frac{c}{2}$,
        \item $v(T) = c + 1$ if $|T| = \frac{c + 2}{2}$,
        \item $v(T) = c + 2$ if $|T| > \frac{c + 2}{2}$.
    \end{itemize}
      
    With every agent $i$ we associate a distinct partition of $S$ into two equal size sets. One of the two sets is denoted by $T_i$, and the other is $S \setminus T_i$.  (The value of $n$ was chosen so as not to exceed the number of such partitions.) To obtain $v_i$ from $v$, we only change the values of the sets $T_i$ and $S \setminus T_i$, setting $v_i(T_i) = v_i(S \setminus T_i) = c+2$. Observe that the resulting $v_i$ integer valued, submodular, and bounded by $c + 2 = O(\log n)$, as desired.

    For every agent $i$, her MMS value is $c+2$. There is a unique MMS partition that gives $i$ this value: $n-2$ parts each contain a single large item, and the remaining two parts are $T_i$ and $S \setminus T_i$.

    In this instance, there is no MMS allocation. In every allocation, there are at least two agents, say $i$ and $j$, that receive only items from $S$. To get a value of $c+2$, each of them must receive at least $\frac{c+2}{2}$ items from $S$, meaning that each receives exactly $\frac{c+2}{2}$ items from $S$. Thus, $i$ should receive one of the sets $T_i$ or $S \setminus T_i$, and $j$ should receive one of the sets $T_j$ or $S \setminus T_j$. However, as no set desired by $i$ is a complement of a set desired by $j$, at least one of the two agents does not get a desired set.
\end{proof}

The proof of Theorem~\ref{thm:submodular}, with straightforward modifications, extends also to chores. The cost function $|v_i|$ over chores is derived from the valuation $v_i$ over goods, with the following differences. Large items have cost $c+1$. For sets $T$ of size $\frac{c+2}{2}$, $|v_i|(T) = c+2$ if $v_i(T) = c+1$, and $|v_i|(T) = c+1$ if $v_i(T) = c+2$. The MMS cost is $c+1$, but in every allocation, some agent gets a set of cost $c+2$.

\subsection{Existence of MMS allocations over goods}

In this section, we prove our lower bound on $\mu(n)$ for monotone valuations over goods. We use the following key definition (the terminology $n_c$ is taken from~\cite{hummel2023lowerboundsmaximinshare}).

\begin{definition}
    \label{def:nc}
    For every integer {$c \ge 1$}, define $n_c$ to be the minimum value of $n$ for which the following holds. Consider any integer $1 \le d \le c$, any allocation instance with a set $\items$ of $c + d$ goods and a set $\agents$ of $n \ge n_c$ agents, where each agent $i$ has an arbitrary monotone valuation function $v_i$, for which the $MMS(\items, v_i, d)$ partition contains only parts of size at least~2. Then, there is a partition of $\items$ into $d$ parts $B_1, \ldots, B_{d}$, and a choice of $d$ distinct agents, $\agents' = \{a_1, \ldots, a_{d}\}$, such that  $v_{a_i}(B_i) \ge MMS(\items, v_{a_i}, d)$, for every $1 \le i \le d$. In other words, there is a subset $\agents' \subset \agents$ of $d$ agents, such that the allocation instance with items $\items$ and agents $\agents'$ has an MMS allocation. 
\end{definition}

Our lower bound on $\mu(n)$ follows easily from the following two lemmas.

\begin{lemma}
    \label{lem:ncApplication}
    For every integer $c \ge 1$ and $n_c$ as in Definition~\ref{def:nc}, for every $n \ge \max[20c, n_c + 4c]$, every instance with monotone valuations and $n + c$ goods has an MMS allocation.
\end{lemma}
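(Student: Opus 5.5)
The plan is to reduce to Definition~\ref{def:nc} on a small set $X$ of $c+d$ items, and to hand out the remaining $n-d$ items as singletons through a perfect matching obtained from Lemma~\ref{lem:matching}.

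\textbf{Normalization.} As in the introduction, assume each agent $i$ has an MMS partition $P^i$ of $\items$ into $n$ non-empty parts. Assume also that a bundle is acceptable for $i$ exactly when it contains a part of $P^i$. Since $|\items| = n+c$, at most $c$ parts of $P^i$ have size $\ge 2$, and together they hold at most $2c$ items. All other parts are singletons. Build a bipartite graph $G$ between agents and items, with an edge $(i,e)$ when $\{e\}$ is a part of $P^i$. Every agent has degree at least $n-c$ in $G$. The number of non-edges is at most $2cn$, because each agent is non-adjacent to at most $2c$ items.

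\textbf{Key observation.} Let $X$ be any set of $c+d$ items. Every part of $P^i$ that is not contained in $X$ meets $\items\setminus X$, which has only $n-d$ items. So at least $d$ parts of $P^i$ lie inside $X$. Merging leftover items of $X$ into these parts gives a partition of $X$ into $d$ bundles, each acceptable for $i$. Hence $MMS(X,v_i,d) \ge MMS(\items,v_i,n)$ for every agent. If moreover no singleton part of $P^i$ lies in $X$, we can work with the valuation ``contains a non-singleton part of $P^i$ inside $X$''. Its MMS partition on $X$ into $d$ parts then has all parts of size $\ge 2$, which is exactly the hypothesis of Definition~\ref{def:nc}.

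\textbf{Steps.}
\begin{enumerate}
\item Choose $X$ to consist of the ``bad'' items: those adjacent in $G$ to fewer than roughly half the agents. Pad $X$ if needed so that $|X| = c+d$ with $1 \le d \le c$.
\item Call an agent \emph{eligible} if none of her singleton parts lies in $X$. We need at least $n_c$ eligible agents, and this is where the slack $n \ge n_c + 4c$ should enter. Apply Definition~\ref{def:nc} to $X$ and the eligible agents. This yields $d$ agents, each receiving a bundle of $X$ that contains a part of her MMS partition.
\item Match the remaining $n-d$ agents to the $n-d$ items of $\items\setminus X$ inside $G$. Each such agent loses at most $|X| \le 2c$ neighbours, so her degree is at least $n-3c \ge (n-d)/2$ when $n \ge 20c$. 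Each item outside $X$ is good by choice, so it has at least $(n-d)/2$ neighbours among the remaining agents. Lemma~\ref{lem:matching} then gives a perfect matching, and every agent receives an acceptable bundle.
\end{enumerate}

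\textbf{Main obstacle.} Steps 1 and 2 pull against each other. Items with low $G$-degree must go into $X$ so that the matching works. But $X$ has only $c+d \le 2c$ items, while the non-edge count $2cn$ only bounds the number of items of degree below $n/2$ by about $4c$. Also, every agent with a singleton part inside $X$ becomes ineligible.

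I would first try to balance the thresholds. Put into $X$ only items of degree below a cutoff near $n/2 - O(c)$, and argue that at most $2c$ such items exist. Then count ineligible agents by the edges incident to $X$, hoping to lose at most $4c$ agents. If that count fails, the fallback is an iterative reduction. Whenever some item $e$ is a singleton for many agents, give $e$ to one such agent, remove both, and merge $e$'s part into another part in every other agent's partition. This preserves the excess $c$. Repeat until the degree conditions hold.
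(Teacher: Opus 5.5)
\textbf{There is a genuine gap.} Your outline correctly locates the difficulty but leaves it unresolved, and the first resolution you suggest is false.

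\emph{The low-degree items need not fit into $X$.} Take $c=1$. Give a third of the agents the two-item part $\{x,y\}$, a third $\{y,z\}$, and a third $\{x,z\}$, with all other items singletons for everyone. Then $x,y,z$ each have degree about $n/3$, while $|X|=c+d=2$.

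\emph{Counting ineligible agents via edges at $X$ also fails.} A single item of $X$ can be a singleton for a constant fraction of the agents. In the example, $X=\{x,y\}$ already makes two thirds of them ineligible. The hypothesis only gives you $4c$ agents to spare beyond $n_c$.

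\emph{Your step~3 is too tight as written.} With a cutoff of roughly $n/2$, an item of degree $n/2$ keeps only $n/2-d<(n-d)/2$ neighbours once $d$ agents leave.

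\textbf{The missing idea} is to let the small set be \emph{produced} by a greedy step, and to make the final matching insensitive to which agents were removed. The paper's argument runs as follows.
\begin{enumerate}
\item Let $K$ be the items of degree below $(n+4c)/2$. The non-edge count then gives $k=|K|\le 5c$. If $k\le c$, one matches directly.
\item Repeatedly give an item of $K$, as a singleton, to a remaining agent for whom it is large. If this halts with at most $c$ items left, the matching in step~3 finishes directly.
\item Otherwise it halts with a set $K'$ of $c+d$ items, $1\le d\le c$. No item of $K'$ is large for any surviving agent, so \emph{every} survivor satisfies the hypothesis of Definition~\ref{def:nc}. This uses your own observation $MMS(K',v_a,d)\ge MMS(\items,v_a,n)$; no merging of parts is needed, and the graph $G$ never changes.
\item At most $k-c-d\le 4c-d$ agents were removed, leaving at least $n-4c+d\ge n_c$. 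This is precisely where $n\ge n_c+4c$ is used.
\item The $k-c$ agents served from $K$ (greedy ones plus the $d$ chosen via Definition~\ref{def:nc}) are not under your control. So the paper shows that \emph{every} set of $n+c-k$ agents has a perfect matching into $\items\setminus K$. Item degrees remain at least $(n+4c)/2-(k-c)\ge (n+c-k)/2$ because $k\le 5c$. Agent degrees remain at least $n-c-k\ge (n+c-k)/2$ because $n\ge k+3c$. Lemma~\ref{lem:matching} then applies.
\end{enumerate}

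Your iterative fallback points in this direction, but as stated it lacks three things: a stopping rule guaranteeing that all survivors are eligible, a bound on the number of removed agents, and slack in the degree threshold to absorb those removals.
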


A lemma similar to Lemma~\ref{lem:ncApplication}  was used in~\cite{hummel2023lowerboundsmaximinshare} for additive valuations (and with a slightly smaller value of $n$). 
For additive valuations, its proof is a straightforward consequence of the fact~\cite{bouveret2016characterizing} that for MMS allocations with additive valuations, it suffices to consider instances with {\em same order preferences} over individual items. 

Our second lemma improves upon the bound of $n_c \le 2^{O(c \log c)}$ given in~\cite{hummel2023lowerboundsmaximinshare} in two respects: our upper bound is asymptotically smaller (has only exponential dependence on $c$), and it applies to all monotone valuations and not just to additive valuations.

\begin{lemma}
    \label{lem:ncUpperBound}
    The value of $n_c$ in definition~\ref{def:nc} satisfies $n_c \le (2.0922 + o(1))^c$.
\end{lemma}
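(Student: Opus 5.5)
The plan is to reuse the popularity-counting argument from the proof of Theorem~\ref{thm:arbitraryYes}, adapted to goods. Fix $1 \le d \le c$ and an instance with $c+d$ goods and $n$ agents. As explained in the introduction, for goods we may assume the worst case: each agent $i$ has a unique $MMS(\items,v_i,d)$ partition $P^i$ into $d$ parts, each of size at least $2$, and a bundle is acceptable for $i$ iff it contains some part of $P^i$. It then suffices to find $d$ pairwise disjoint sets $T_1,\ldots,T_d$ and $d$ distinct agents $a_1,\ldots,a_d$ such that $T_k$ is a part of $P^{a_k}$. Any leftover items can be added to an arbitrary bundle (monotonicity), which yields the required partition $B_1,\ldots,B_d$. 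Assuming no such configuration exists, I will upper bound $n$.

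The key step is the following. Call a set $S$ \emph{$j$-popular} if it is a part of at least $j$ of the partitions $P^i$. Order the parts of each $P^i$ in non-increasing size, $T_1,\ldots,T_d$. If every $T_j$ is $j$-popular, we can assign the parts greedily in order $j=1,\ldots,d$, each time to a fresh agent whose partition contains $T_j$. The parts are disjoint because they come from the single partition $P^i$, so this contradicts the assumption. Hence every agent has some index $j\ge 2$ whose $j$-th part is $j$-unpopular, and we classify agents into classes $C_j$ accordingly.

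Now I use the size constraint. Since the last $d-j$ parts have size at least $2$ each, the first $j$ parts together have size at most $c+d-2(d-j)=c-d+2j$. Hence $|T_j| \le 2 + \frac{c-d}{j}$, and also $|T_j|\ge 2$. Each $j$-unpopular set can be the witness for at most $j-1$ agents. Therefore
\[
|C_j| \le (j-1)\sum_{s=2}^{2+\lfloor (c-d)/j\rfloor}\binom{c+d}{s},
\]
and summing over $2\le j\le d$ (and over $d\le c$ if we want a bound uniform in $d$) costs only a $\mathrm{poly}(c)$ factor. The dominant term is $j=2$, giving $n \le \mathrm{poly}(c)\cdot\max_d \binom{c+d}{(c-d)/2+2}$. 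Larger $j$ only shrink the binomial's lower index, which is below half of $c+d$, so they are dominated; the one check here is that $(c-d)/2+2 \le (c+d)/2$, i.e.\ $d\ge 2$, and for $d=1$ the statement is trivial.

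The remaining step, and the one I expect to be the main obstacle (mostly computational, plus making sure the $j=2$ term really dominates), is the optimization. Write $d=\alpha c$ with $\alpha\in[0,1]$. Then
\[
\binom{c+d}{(c-d)/2} = 2^{(1+\alpha)\,H\left(\frac{1-\alpha}{2(1+\alpha)}\right)c + o(c)},
\]
where $H$ is the binary entropy function. Maximizing $f(\alpha)=(1+\alpha)H\bigl(\frac{1-\alpha}{2(1+\alpha)}\bigr)$ over $\alpha\in[0,1]$ should give $2^{\max f}\approx 2.0922$. I would verify this numerically, and analytically by setting $f'(\alpha)=0$, which reduces to a quadratic-type equation in the ratio of the binomial's indices. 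The shift by $+2$ in the lower index and the polynomial factors are absorbed into the $o(1)$ in the base. Consequently, once $n > (2.0922+o(1))^c$, the assumption fails for every $d\le c$, which proves $n_c \le (2.0922+o(1))^c$.
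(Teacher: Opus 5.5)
Your reduction to exact parts, the greedy assignment, and the class bounds are all fine. The final numerical step is false, though, and it carries the whole content of the lemma. Write $y=\frac{1-\alpha}{2(1+\alpha)}$. The stationarity condition for $f(\alpha)=(1+\alpha)H(y)$ is the cubic $y=(1-y)^3$, which gives $y\approx 0.318$, $\alpha\approx 0.223$, and $2^{f(\alpha)}\approx 2.148$. Already $f(0)=1$ gives base $2$, and the maximum, about $1.103$, is well above $\log 2.0922\approx 1.065$.

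So your argument proves only $n_c\le(2.148+o(1))^c$. This is exactly the crude bound of Claim~\ref{cl:alpha}: with $q=c+2$ and $p=d-2$, your binomial is $\binom{q+p}{(q-p)/2}$. You bound $|C_2|$ by the number of \emph{all} sets of the admissible size, so no tuning inside your scheme can get below $2.148$.

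The missing idea is a structural restriction on the second part $T_2$ beyond its size. The paper's route is as follows.

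It switches to the upward-closed notion: $S$ is $j$-good if at least $j$ partitions have some part contained in $S$. It then reclassifies. A partition goes into $C_2$ only if all of $T_3,\ldots,T_d$ are $d$-good. Otherwise some $T_j$ with $j\ge 3$ is $d$-bad, and these classes together contribute only $(1.91+o(1))^c$.

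The parts $T_3,\ldots,T_d$ of all partitions in $C_2$ form a hypergraph on $c+d$ vertices. Its maximum matching has size $p=d-2$ or $p+1$, since a $d$-matching would give an allocation. Fix a $p$-matching $M=\{e_1,\ldots,e_p\}$ covering the fewest vertices, say $R$. Every $T_2$ then satisfies two constraints:
\begin{itemize}
\item $|T_2|\le (c+d-R)/2$, because $V\setminus(T_1\cup T_2)$ must still contain a $p$-matching;
\item $T_2$ contains no $e_j$, because $T_2$ is $2$-bad while every hyperedge is $d$-good and goodness is upward closed. Your notion, where $S$ must itself be a part, gives no such exclusion.
\end{itemize}

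The number of such sets of size $t$ is at most $x^{-t}(1+x)^{c+d-R}\prod_j\bigl((1+x)^{r_j}-x^{r_j}\bigr)$ for any $x>0$. One checks that $r_j=2$ is the worst case for $x<0.6$. The resulting quantity $\max_\gamma\inf_x (1+x)^{1-\gamma}(1+2x)^{\gamma}x^{-(1-\gamma)/2}$ is what equals $2.0922$.

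Without an argument of this kind excluding the matching edges from $T_2$, your proof stops at $2.148$.
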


With the above two lemmas we prove the lower bound on $\mu(n)$ for monotone valuations over goods. claimed in Theorem~\ref{thm:main}.

\begin{proof}
For a given $n$, let $c$ be largest so that the inequality  $n \ge \min[20c, (2.0922 + o(1))^c + 4c)]$ holds, with the $o(1)$ term being that of Lemma~\ref{lem:ncUpperBound}. The combination of Lemmas~\ref{lem:ncUpperBound} and~\ref{lem:ncApplication} implies that every instance with $n+c$ goods has an MMS allocation. As $n$ grows, the value of $c$ approaches $\frac{\log n}{\log 2.0922} > 0.93 \log n$.  
\end{proof}

We now prove Lemma~\ref{lem:ncApplication}.

\begin{proof}
    Denote the number of items by $n+c$. We assume that $n \ge 20c$.
    
    We say that an item $e$ is {\em large} for agent $i$ if $v_i(e) \ge MMS(\items, v_i, n)$. For every $i$, there are at least $n-c$ items that are large for $i$, because
    the MMS partition for $v_i$ must contain at least $n - c$ parts that each contain a single item.

    Consider the following bipartite graph $G(\agents, \items; E)$. There are $n$ vertices on one side, each representing an agent, and $n+c$ vertices on the other side, each representing an item. For every agent $i \in \agents$ and item $e \in \items$, the graph $G$ contains the edge $(i,e)$ if and only if $e$ is large for $i$. Hence, every vertex in $\agents$ has degree at least $n -c$, and the total number of (bipartite) edges missing from $G$ is at most $2nc$. 

    If $G$ has a matching that matches all vertices of $\agents$, then this matching induces an MMS allocation. Hence, we may assume that there is no such matching.

    Remove from $\items$ any vertex $e$ that has degree smaller than $\frac{n + t}{2}$, for $t = 4c$. 
    Let $K$ denote the set of items removed, let $k$ denote their number. The upper bound on the number of edges missing from $G$ implies that $k \le \frac{4nc}{n - t}$. For our choice of parameters (that is, $n \ge 20c$ and $t =4c$), we have that $k \le 5c$. Let $G'$ denote the bipartite graph that remains, and let $\items' = \items \setminus K$ denote the set of vertices that remain in the $\items$ side. We have that $|\items'| = n+c-k$.

    We may assume $k > c$. Suppose otherwise. Then in $G'$, keep removing vertices from $\items$ until exactly $n$ vertices remain in $\items$. At that point, we have a bipartite graph with $n$ vertices on each side. Every vertex in the $\agents$ side has degree at least $n - 2c \ge \frac{n}{2}$ (as $n \ge 20c$), and every vertex in the $\items$ side has degree at least $\frac{n+t}{2}$. By Lemma~\ref{lem:matching}, this graph has a perfect matching, contradicting our assumption that not all of $\agents$ can be matched in $G$.


    \begin{lemma}
    \label{lem:residual}
       In $G'$, every set of $n+c-k$ vertices of $\agents$ has a perfect matching with $\items'$.
    \end{lemma}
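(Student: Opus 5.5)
Since $k > c$, we have $|\items'| = n+c-k < n$, so the claim is that every subset $\agents'' \subseteq \agents$ of exactly $|\items'|$ agents, together with $\items'$, forms a balanced bipartite graph in $G'$ with a perfect matching. I will prove this with Lemma~\ref{lem:matching}. For that I need every vertex in this balanced graph to have degree at least half of $n+c-k$.

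\emph{Degrees of item vertices.} Every $e \in \items'$ survived the removal step, so $e$ has degree at least $\frac{n+t}{2}$ in $G$, hence in $G'$, since no agent vertices were removed. Restricting to the agents of $\agents''$ deletes at most $n - (n+c-k) = k-c$ agents. So the degree of $e$ into $\agents''$ is at least
\[
\frac{n+t}{2} - (k-c).
\]
I need this to be at least $\frac{n+c-k}{2}$. That is equivalent to $t \ge k - c$, i.e.\ $k \le t + c = 5c$, which is exactly the established bound $k \le 5c$ (this is why $t=4c$ was chosen).

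\emph{Degrees of agent vertices.} Each agent has degree at least $n-c$ in $G$. Removing the $k$ items of $K$ leaves degree at least $n-c-k$ into $\items'$. I need $n-c-k \ge \frac{n+c-k}{2}$, i.e.\ $n \ge 3c + k$. With $k \le 5c$ this holds whenever $n \ge 8c$, which follows from $n \ge 20c$.

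Lemma~\ref{lem:matching} then gives the perfect matching. The only delicate step is the item-side calculation: it checks that the degree loss from discarding $k-c$ agents is paid for exactly by the slack $t/2$ built into the removal threshold. Both inequalities reduce to the parameter choices $t=4c$, $k\le 5c$ and $n\ge 20c$ already fixed in the proof.
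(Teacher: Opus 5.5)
Your proof is correct and is the paper's argument: you apply Lemma~\ref{lem:matching} to the balanced bipartite graph, checking that item degrees satisfy $\frac{n+t}{2}-(k-c)\ge\frac{n+c-k}{2}$ (equivalently $k\le t+c$) and that agent degrees satisfy $n-c-k\ge\frac{n+c-k}{2}$ (equivalently $n\ge k+3c$). Both inequalities follow from $t=4c$, $k\le 5c$ and $n\ge 20c$, exactly as in the paper.
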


    \begin{proof}
    Removing $k - c$ vertices from $\agents$, the degree of each vertex in $\items$ is at least $\frac{n+t}{2} - k + c$. The degree of each vertex that remains in $\agents$ is at least $n - c - k$. Consider the following two inequalities.
    \begin{itemize}
        \item $\frac{n+t}{2} - k + c \ge \frac{n+c-k}{2}$, or equivalently, $t + c \ge k$.
        \item $n - c - k \ge \frac{n+c-k}{2}$, or equivalently, $n \ge k + 3c$.
    \end{itemize}
    Both inequalities hold, because $t = 4c$, $k \le 5c$ and $n \ge 20c$. Lemma~\ref{lem:matching} (with $n + c - k$ instead of $n$) implies the existence of the desired perfect matching.
    \end{proof}

    Equipped with Lemma~\ref{lem:residual}, we now prove Lemma~\ref{lem:ncApplication}. Recall that $k > c$. 
    For every agent $a \in \agents$, $MMS(K, v_a, k-c) \ge MMS(\items, v_a, n)$ {(because at most $|\items'| = n+c-k$ parts of agent $a$'s $n$-partition of $\items$ contain items from $\items'$, and so at least $k-c$ parts are fully contained in $K$).}

    Recall that $k \le 5c$. If there are agents for whom some items in $K$ are large (this will necessarily happen if $k > 2c$, but may also happen if $k \le 2c$),  we use the following preliminary procedure.  
     As long as possible, give an agent a large item from $K$, removing the agent and the item. If this process ends with $c$ or fewer items, then the argument preceding Lemma~\ref{lem:residual} completes the proof. Hence, we may assume that the set $K'$ of remaining items satisfies $|K'| = c + d$, for some $d$ in the range $[1, c]$. At this point, the $MMS(K', v_a, d)$ partition of every remaining agent $a$ contains only parts of size at least~2. The number of remaining agents is at least $|\agents| - 4c + d \ge n_c$. 
    
    We have reached a stage where Definition~\ref{def:nc} implies that there is a subset $\agents' \subset \agents$ of $d$ agents, such that the allocation instance with items $K'$ and agents $\agents'$ has an MMS allocation. Lemma~\ref{lem:residual} implies that the $n + c - k$ remaining  agents can each receive a large item. Thus, we have an MMS allocation for $\agents$ with the items $\items$. 
\end{proof}

We wish to prove that $n_c \le (b + o(1))^c$ for some constant $b$ that is as close as possible to~2 (or better still, $b=2$).  Lemma~\ref{lem:ncUpperBound} claims that we can take $b = 2.0922$. Conceptually, one may think of its proof as being composed of two parts. One part is to get a crude upper bound on $b$, and the other part is to optimize this upper bound.

The first part can be obtained using a straightforward adaptation of the proof of Theorem~\ref{thm:arbitraryYes}. As $n_c$ relates to a setting in which the number of items is at most $2c$, the proof of Theorem~\ref{thm:arbitraryYes} implies that we may take $n_c = O(2^{2c})$, implying that we can take $b = 4$.

The main work in the proof of Lemma~\ref{lem:ncUpperBound} is to reduce $b$ to $2.0922$. This is significantly more complicated, and involves the introduction of new ideas. In this section we only outline the definitions and intermediary steps that we use. Full proofs are presented in Section~\ref{sec:goods} in the appendix.

To prove Lemma~\ref{lem:ncUpperBound}, we introduce a parameter $\alpha(q)$ that we associate with hypergraphs. The relation between $\alpha(q)$ and $n_c$ will be made clear in  Lemma~\ref{lem:hypergraphReduction}. 

\begin{definition}
    \label{def:alpha}
    The parameter $\alpha(q)$ is defined in several steps.

    \begin{enumerate}
        \item A hypergraph $G(V,E)$ will be called $(q,p)$ if it has $q + p$ vertices, and its maximum matching  contains either $p$ or $p+1$ hyperedges. The definition implies that $0 \le p \le q$.

\item In a $(q,p)$ hypergraph, two disjoint sets of vertices, $S$ and $T$, are called a {\em free pair} if each of them is an independent set in $G$, and the size of the maximum matching in the sub-hypergraph induced on $V \setminus (S \cup T)$ is $p$. 

\item A collection of free pairs $(S_1,T_1), (S_2,T_2), \ldots$ is called {\em an anti-chain} if no set that is a member of a pair in the collection (no set $S_i$ or $T_i$) contains (or is equal to) another such set (a set $S_j$ or $T_j$). 

\item $\alpha(G)$ denotes the maximum number of free pairs in an anti-chain of free pairs in $G$. We define $\alpha(q,p) = \max_{G}[\alpha(G)]$, where $G$ ranges over all $(q,p)$ hypergraphs. Finally, we define $\alpha(q) = \max_{p}[\alpha(q,p)]$. 
\end{enumerate}
\end{definition}

The following lemma connects between $\alpha(q)$ of Definition~\ref{def:alpha} and $n_c$ of Definition~\ref{def:nc}.

\begin{restatable}{lemma}{hypergraphReduction}
    \label{lem:hypergraphReduction}
    $n_c \le \alpha(q) + (1.91 + o(1))^c$, for $q = c+2$. 
\end{restatable}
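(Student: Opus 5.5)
The plan is to mimic the popularity argument in the proof of Theorem~\ref{thm:arbitraryYes}, but to treat the dominant class, partitions whose only unpopular part is the second one, through the hypergraph parameter $\alpha(q)$. Fix $1 \le d \le c$, a set $\items$ of $c+d$ goods, and agents whose $MMS(\items,v_i,d)$ partitions have all parts of size at least~2. As in the monotone-goods discussion, I may assume that the acceptable bundles of agent $i$ are exactly the supersets of parts of her partition $P^i$. Assume that no $d$ agents admit an MMS allocation; the goal is to bound the number $n$ of agents.

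\textbf{Popular sets and the first split.} Call a set \emph{popular} if it is a part of at least $d$ of the partitions; popular sets can then be assigned to distinct agents greedily. Order each partition's parts by non-increasing size. As in Theorem~\ref{thm:arbitraryYes}, if every part $T_j$ were $j$-popular we would obtain an allocation. So every partition has some index $j\ge 2$ at which $T_j$ is $j$-unpopular, and I classify it by such a $j$.

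\textbf{Bounding the classes $j \ge 3$.} The partitions in classes $j\ge 3$ are counted as before, by $(j-1)$ times the number of candidate sets of size $s\le (c+d)/j$. The new input is the constraint that all $d$ parts have size at least~2, so the total excess $\sum_i (|T_i|-2)$ equals $c-d$. I expect this to sharpen the entropy estimate: a $j$-th part of size $s$ forces the first $j$ parts to use at least $js$ items, which restricts $d$ relative to $s$ and $c$. Optimizing over $d,j,s$ should give a bound of the form $2^{(c+d)H(s/(c+d))}\le (1.91+o(1))^c$. The same counting should also absorb partitions classified at $j=2$ in which some later part $T_j$, $j\ge3$, is not popular in the strong sense (fewer than $d$ occurrences); if the plain entropy estimate is too weak for those, I would refine it by keeping the exact counts.

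\textbf{The class $j=2$ and the hypergraph.} For the remaining partitions, every part $T_3,\dots,T_d$ is popular and $T_2$ is unpopular. Define the hypergraph $G$ on vertex set $\items$ whose hyperedges are the popular sets. Set $p=d-2$, so that $|\items|=q+p$ with $q=c+2$. I would verify the properties of Definition~\ref{def:alpha} in turn:
\begin{itemize}
\item A matching of $d$ popular sets would yield an MMS allocation, since leftover goods can be added to any bundle by monotonicity. Together with the matching $T_3,\dots,T_d$, this shows that the maximum matching has size $p$ or $p+1$.
\item For such a partition, let $(S,T)=(T_1,T_2)$. The complement $V\setminus(S\cup T)$ has a matching of size $p$. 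It cannot have one of size $p+1$: otherwise those $d-1$ popular sets plus $S\cup T$ given to the agent is an allocation.
\item $S$ and $T$ are independent: if, say, $S$ contained a popular hyperedge $e$, then $e$, $T$ (to the owner), and a $p$-matching in the complement would again give $d$ bundles to distinct agents.
\end{itemize}
Thus each such agent yields a free pair.

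\textbf{Main obstacle: the anti-chain.} I must show that the free pairs of distinct agents form an anti-chain, or else charge the violations to the $(1.91+o(1))^c$ term. Since $T_2$ is unpopular, only one partition has it as a part (popularity threshold~2), so pairs are distinct. For containment, say $S_i\subseteq T_j$ or $S_i\subseteq S_j$, I would try to combine: give agent $i$ the bundle $S_i$, give agent $j$ its other set, and cover the rest by a $p$-matching of popular sets. This works if the complement of $S_i$ and of $j$'s other set still contains such a matching, which is the delicate point. If it fails, I would instead show that such configurations force a small or unpopular part, placing the partition in a counted class. Summing the contributions then gives $n\le \alpha(q)+(1.91+o(1))^c$, which is the bound claimed for $n_c$.
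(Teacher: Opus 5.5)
Your plan is the paper's proof in all essentials. You classify partitions by an unpopular part and count the classes $j\ge 3$ by entropy using $|T_j|\le \frac{c-d}{j}+2$. The multiplicity $d-1$ instead of $j-1$ only costs a factor $c$, so no refinement of the estimate is needed, and the worst case $j=3$ gives a base of roughly $1.905<1.91$. The remaining class (all of $T_3,\dots,T_d$ popular) becomes free pairs $(T_1,T_2)$ in a $(q,p)$ hypergraph with $p=d-2$. Two of your choices differ from the paper, and both are harmless:
\begin{itemize}
\item you define popularity as ``is a part of at least $d$ partitions'' rather than the paper's ``contains a part of at least $d$ partitions'';
\item you take all popular sets as hyperedges.
\end{itemize}
Your verification of the free-pair conditions is correct.

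What is missing is a proof of the anti-chain property. You leave it as ``the delicate point'' with an unspecified fallback about forcing a small or unpopular part. No fallback is needed, because the matching you worry about always exists.

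Suppose $X'\subseteq X$ for agents $i\neq j$, where $X'\in\{S_i,T_i\}$ and $X\in\{S_j,T_j\}$. Let $Y$ be the other member of $j$'s pair. Since $X\cap Y=\emptyset$, we have $X'\cap Y=\emptyset$ and $X'\cup Y\subseteq S_j\cup T_j$. Hence the complement of $X'\cup Y$ contains $j$'s own parts $T^j_3,\dots,T^j_d$, which are popular.

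Now give $X'$ to $i$ and $Y$ to $j$; both are parts of the respective agents' partitions. Assign the $p=d-2$ popular sets greedily to agents outside $\{i,j\}$. Each has at least $d$ owners, hence at least $p$ outside $\{i,j\}$, so the greedy assignment succeeds. Add the leftover goods anywhere, using monotonicity. This gives $d$ agents an MMS allocation, contradicting your assumption. The case $X'=X$ is included, which also yields distinctness of the pairs.

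The paper phrases the same step as follows: $X\supseteq X'$ is acceptable to $i$, so $i$ takes $X$ and $j$'s whole partition is reused.

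With this argument inserted, your proof is complete and gives $n_c\le\alpha(q)+(1.91+o(1))^c$.
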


We prove the following upper bound on $\alpha(q)$.

\begin{restatable}{lemma}{alphaBound}
\label{lem:alpha}
$\alpha(q)\le (2.0922+o(1))^q$.
\end{restatable}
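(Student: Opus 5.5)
The plan is to bound the number of sets that can appear in an anti-chain of free pairs. We combine a Sperner/LYM-type counting argument with structural constraints coming from the matching condition, and then optimize an entropy expression over $p$. Fix a $(q,p)$ hypergraph $G$ and an anti-chain of free pairs $(S_1,T_1),\ldots,(S_\alpha,T_\alpha)$.

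The first, crude observation is a size bound. For every free pair, $V\setminus(S\cup T)$ contains a matching of size $p$. Since hyperedges have at least two vertices, this gives $|S|+|T|\le q-p$, so $\min(|S|,|T|)\le (q-p)/2$. The smaller sets of the pairs are distinct, because the family is an anti-chain, and they form a Sperner family of sets of size at most $(q-p)/2$ in a ground set of $q+p$ vertices. Hence
\[
\alpha(G)\le \binom{q+p}{\lfloor (q-p)/2\rfloor}.
\]
At $p=0$ this is $2^q$. Writing $p=xq$, the exponent is $(1+x)H\bigl(\tfrac{1-x}{2(1+x)}\bigr)$, which for intermediate $x$ exceeds $1$; for example $x\approx 0.2$ gives roughly $2.14^q$. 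So the crude bound already gives $(2.2)^q$-type estimates, and all the real work is to shave the intermediate range of $p$ down to $2.0922$.

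For the refinement I will fix a maximum matching $M^*$ of $G$, of size $p$ or $p+1$, and let $U$ be the set of vertices it leaves uncovered, so $|U|\le q-p$. For a free pair $(S,T)$ the following constraints hold.
\begin{itemize}
\item[(i)] $S$ and $T$ are independent, so neither contains an entire edge of $M^*$.
\item[(ii)] The residual graph has maximum matching exactly $p$. Hence at most one edge of $M^*$ (when $|M^*|=p+1$), or none (when $|M^*|=p$), can be ``destroyed'' without being compensated by re-routing through $U$.
\end{itemize}
Together these say that $S\cup T$ lies mostly inside $U$, together with a bounded number of vertices from the $M^*$-edges. I would make this precise by an augmenting-path type argument. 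Then I would encode each pair by its trace on $U$ and its trace on $V(M^*)$, and apply a weighted LYM inequality in the product structure: sizes inside $U$ contribute a binomial over $|U|$, and the few vertices taken from $M^*$-edges contribute a lower-order or small-exponential factor. The result should be a bound of the form $\max_{x}\exp_2\bigl(q\cdot F(x)\bigr)$ up to polynomial factors, for an explicit entropy-type function $F$.

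The main obstacle is step (ii): turning ``the residual matching is still $p$'' into a sharp count of how many vertices of $V(M^*)$ a free pair may use. Edges of $G$ outside $M^*$ may allow re-routing, so $S\cup T$ can in principle hit many $M^*$-edges. My first attempt would be to choose $M^*$ to minimize the total size of its edges, and to argue via alternating structures that each hit edge forces a compensating vertex of $U$ to be unused by $S\cup T$. This gives a tradeoff between $|S\cup T|\cap V(M^*)$ and the number of vertices of $U$ available. The final step is a one-variable numerical optimization of the resulting exponent over $x=p/q$ (and over the split parameter), which I expect to yield the constant $2.0922$.
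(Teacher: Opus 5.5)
\textbf{There is a genuine gap.} Your crude bound is correct; it is the paper's $2.148^q$ estimate. The refinement, however, fails at the step you yourself flag as the main obstacle. That step is the whole content of the lemma, and the structural claim you want from it is false.

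\emph{The structural claim is false.} $S\cup T$ need not lie in $U$ up to a bounded number of vertices of $V(M^*)$. Take $p\le q/2$ disjoint triangles plus $q-2p$ isolated vertices, with $M^*$ one edge per triangle. Any set made of one endpoint of the $M^*$-edge from each triangle is independent and leaves a $p$-matching. So it splits into a free pair with $|(S\cup T)\cap V(M^*)|=p$, and the possible traces on $V(M^*)$ number $3^p$. This part of the count is genuinely exponential, not lower order, and must be balanced exactly against the $U$-part.

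\emph{The fallback does not extend to hypergraphs.} Your fallback, that each hit $M^*$-edge is paired with a distinct unused vertex of $U$, does essentially hold for graphs, via the decomposition of $M^*\triangle M'$ into alternating paths. It is the Gallai--Edmonds argument behind the paper's $(2+o(1))^q$ bound for graphs. For hypergraphs, the symmetric difference of two matchings has no such path structure, so this tradeoff is not available. The paper leaves the sharper hypergraph bound open for exactly this reason. Nothing in your sketch actually produces $F$ or the number $2.0922$.

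\textbf{The missing idea} is that you need no control over where $S\cup T$ sits. You do propose minimizing the total size of the matching, but its payoff is a size bound, not an alternating-path argument.

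\emph{Size bound.} Let $M=\{e_1,\ldots,e_p\}$ be a matching of size exactly $p$ covering the fewest vertices, with $r_j=|e_j|$ and $R=\sum_j r_j$. Since $V\setminus(S\cup T)$ contains some $p$-matching, it has at least $R$ vertices. Hence the smaller set satisfies $|T|\le (q+p-R)/2$, and independence gives $e_j\not\subseteq T$. This budget is what penalizes large hyperedges. With only your budget $(q-p)/2$, a matching of $5$-vertex hyperedges makes the condition $e_j\not\subseteq T$ nearly vacuous, and the count stays near $2.148^q$.

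\emph{Counting.} The $T$'s are distinct by the anti-chain property. For $0<x<1$, the number of admissible $T$ of each size is at most $x^{-(q+p-R)/2}(1+x)^{q+p-R}\prod_j\bigl((1+x)^{r_j}-x^{r_j}\bigr)$.

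\emph{Reduction to graphs.} For $x<0.6$, lowering any $r_j\ge 3$ by one multiplies this by at least $(1+2x^2)/(\sqrt{x}(1+3x^2))>1$. So $r_j=2$ throughout is the worst case.

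\emph{Optimization.} What remains is $\max_{\gamma}\inf_{x}(1+x)^{1-\gamma}(1+2x)^{\gamma}x^{-(1-\gamma)/2}$. This evaluates to about $2.0922$, near $\gamma=p/q\approx 0.219$ and $x\approx 0.546$.
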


The combination of Lemmas~\ref{lem:hypergraphReduction} and~\ref{lem:alpha} implies Lemma~\ref{lem:ncUpperBound}.


The worst case instances for our proof in Lemma~\ref{lem:alpha} 
are when the hypergraph $G(V,E)$ is in fact a graph. In this natural special case, Theorem~\ref{thm:graph} provides the optimal upper bound of $\alpha(q) \le (2+o(1))^q$. This indicates that the bounds in Lemma~\ref{lem:alpha} can be  improved.

\begin{restatable}{theorem}{graph}
\label{thm:graph}
If \(G\) ranges over all graphs (rather than hypergraphs), then
\[
\alpha(q)\le (2+o(1))^q .
\]
\end{restatable}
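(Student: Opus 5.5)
The plan is to bound $\alpha(G)$ for every $(q,p)$ graph $G$ by an encoding argument that uses a maximum matching of $G$, the antichain condition, and the free-pair condition. Fix a $(q,p)$ graph $G(V,E)$ with $|V|=q+p$, and fix a maximum matching $M$ of $G$, so $|M|=\nu\in\{p,p+1\}$. Let $U$ be the set of vertices left unmatched by $M$, so $|U|=q+p-2\nu$, which is $q-p$ or $q-p-2$. Because every member of a free pair is independent, each of $S$ and $T$ contains at most one endpoint of each edge of $M$. The vertex set is thus naturally split into $\nu$ ``edge slots'' and $|U|$ ``singleton slots'', for $q$ slots up to an additive $O(1)$. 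The target is to show that the free pairs in an anti-chain are determined, up to a subexponential factor, by data worth at most $2^{q}$ choices.

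\textbf{Step 1 (the free condition as a covering constraint).} Let $X=S\cup T$. Since $\nu(G-X)=p$ and $G-X$ has $q+p-|X|$ vertices, we get $|X|\le q-p$. More importantly, applying the Tutte--Berge formula to $G-X$ gives a set $B\subseteq V\setminus X$ for which $G-X-B$ has at least $q-p-|X|+|B|$ odd components. I will use the Gallai--Edmonds decomposition $V=D\cup A\cup C$ of $G$ to turn this into a rigid statement about $X$:
\begin{enumerate}
\item on the perfectly matchable part $C$, $X$ can take at most one endpoint per $M$-edge and essentially cannot unbalance it, so $X\cap C$ is nearly forced;
\item the remaining freedom lies in the factor-critical components of $D$ and in which of them receive vertices of $X$.
\end{enumerate}
Quantitatively, I expect to show that $X$ meets at least $\nu-p$ edges of $M$ and otherwise behaves like a subset of the deficiency-carrying part.

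\textbf{Step 2 (antichain counting).} Here I split the free pairs according to which slots they occupy. The sets $S_1,T_1,S_2,T_2,\dots$ form a family of at least $2\alpha(G)$ pairwise incomparable sets. Within a class where the slot pattern is fixed, the relevant poset is a product of the following pieces:
\begin{enumerate}
\item two-element chains, one for each singleton slot;
\item three-element ``V'' posets $\{\emptyset,a,b\}$, one for each edge slot, since a single set uses at most one endpoint of each $M$-edge.
\end{enumerate}
A LYM/normalized-matching inequality applied to this product bounds each class. Summing over the at most $2^{O(\sqrt q)}$ or $\mathrm{poly}(q)$ relevant profiles, such as the sizes $|S|$, $|T|$ and the number of edge slots used, gives the final bound. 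I also use that $S$ and $T$ are disjoint and together use each slot at most twice. For the count to reach $(2+o(1))^q$, Step~1 must show that the edge slots contribute essentially a binary choice each, not a ternary one.

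\textbf{Main obstacle.} Step~1 is where I expect the difficulty: proving that the free condition collapses the three options per $M$-edge (neither endpoint, $a$, or $b$) to effectively two. My first attempt is an alternating-path argument. If some $M$-edge $\{a,b\}$ is untouched by $X$ while another is touched on the ``wrong'' side, then an $M$-alternating path through $G-X$ should let me augment. That would produce a matching of size $p+1$ in $G-X$ and contradict freeness. If this fails in general, I will fall back on the Gallai--Edmonds route and treat each factor-critical component as a single slot, using that $X$ must hit every odd component it intends to ``kill''. The lower-order slack from $\nu=p+1$ versus $\nu=p$ and from $|U|$ versus $q-p$ only costs polynomial factors, which $(2+o(1))^q$ absorbs.
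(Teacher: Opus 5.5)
\textbf{Gap: Step~1 carries the whole argument, it is only announced, and the mechanism you propose for it is false.} Your preliminary facts are right: each of $S,T$ meets every $M$-edge at most once, $|S\cup T|\le q-p$, and the Tutte--Berge reformulation holds. But Step~2 cannot give base $2$ on its own. The product of two-element chains and V-posets has rank generating function $(1+x)^{q-p}(1+2x)^{p}$. This is exactly the count in the proof of Lemma~\ref{lem:alpha} with all $r_j=2$. A single rank level of that poset, of size about $(2.09)^q$ at $p\approx 0.22q$ and $t=(q-p)/2$, is already an antichain. So no LYM-type argument on that poset beats $(2.0922+o(1))^q$.

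The proposed per-edge collapse of $\{\emptyset,a,b\}$ to two options, with one endpoint on a ``wrong side'', does not hold. Suppose a $5$-cycle $v_1\cdots v_5$ is an odd component and $M\supseteq\{v_1v_2,v_3v_4\}$. The admissible traces of $X=S\cup T$ on it are $\emptyset$ and each single vertex. Both endpoints of each $M$-edge are allowed, so every edge keeps three options. The restriction is per component, not per edge, and no alternating-path argument will produce a per-edge dichotomy.

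\textbf{Missing idea: the exact Gallai--Edmonds description of admissible sets.} This is the paper's route. Your fallback points toward it, but ``each factor-critical component as a single slot'' is not the right accounting.
\begin{itemize}
\item If $\nu(G)=p$, the matching part of freeness says exactly that some maximum matching of $G$ misses $X$.
\item Every maximum matching covers $A\cup C$ and misses at most one vertex of each odd component $D_j$. So every $S_i$ and $T_i$ lies in the downward-closed family of subsets of $D$ meeting each $D_j$ in at most one vertex. That family has at most $\prod_j(s_j+1)$ members.
\item A component of size $s_j$ occupies $(s_j+1)/2$ of your slots, and $s+1\le 2^{(s+1)/2}$ for odd $s$. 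Writing $k$ for the number of odd components, $k-|A|=q-p$ gives $\sum_j (s_j+1)/2=q-|C|/2$. Hence the antichain has at most $2^q$ sets.
\end{itemize}
The paper proceeds differently: it counts only the $T_i$ and uses a smoothing argument showing that component sizes $1$ and $3$ are worst when $p\le q/2$. When $p>q/2$ it falls back on Claim~\ref{cl:alpha}, which gives $(1.98+o(1))^q$. The slot inequality above covers both regimes at once.

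\textbf{The case $\nu(G)=p+1$ also needs an argument, not an assertion.} A maximum matching of $G-X$ has size $\nu(G)-1$. One augmenting path turns it into a maximum matching of $G$ that covers only two new vertices. Hence all but at most two vertices of $X$ are avoidable, and the loss is an $O(q^2)$ factor.
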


\subsection{Existence of MMS allocations for chores}

We prove here a lower bound on $\mu(n)$ for monotone valuations over chores. Its proof uses a decomposition into two steps similar to the case of goods. The first step is the analog of Lemma~\ref{lem:ncApplication}, that shows that we can focus on subsets that contain $O(c) = O(\log n)$ items. However, for chores, the proof of this fact is different from the proof in the case of goods, and makes up the first part of the proof of Theorem~\ref{thm:chores}.  
The second step involves the analysis over small sets of items.  Here, our proof of Theorem~\ref{thm:chores} uses only a rather crude bound. The bound can be somewhat improved by using optimizations that we applied for goods,
but this would come at the cost of complications in the proofs.  

\begin{theorem}
    \label{thm:chores}
     Every instance with monotone valuations over chores and at most $n + (0.52 - o(1))\log_2 n$ items has an MMS allocation. 
\end{theorem}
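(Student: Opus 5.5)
We argue by contradiction, following the two-step structure used for goods. Let $m = n + c$ with $c \le (0.52 - o(1))\log n$. We may assume every agent $i$ has a unique MMS partition $P^i$ into $n$ non-empty parts, and that the acceptable bundles for $i$ are exactly the subsets of parts of $P^i$. Since $m = n+c$, at least $n-c$ parts of $P^i$ are singletons, and the non-singleton parts together contain at most $2c$ items.

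Call an item $e$ \emph{small} for agent $i$ if $\{e\}$ is acceptable for $i$. Every item lies in some part of $P^i$, so every singleton is acceptable and every item is small for every agent. What is scarce is therefore not acceptable singletons but room to place the $c$ surplus items. The allocation we aim for gives $n-d$ agents one item each and gives $d$ agents bundles that together cover a set $K$ of $c+d$ items, where each of these bundles is contained in a part of that agent's partition.

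The reduction step mirrors Lemma~\ref{lem:ncApplication}, but a new argument is needed. Let $U_i$ be the union of the non-singleton parts of $P^i$, so $|U_i| \le 2c$. The items outside $U_i$ are exactly the singleton parts of $P^i$. Count the pairs $(i,e)$ with $e \in U_i$: there are at most $2cn$ of them. Hence there is a set $K$ of at most $O(c)$ items such that every item outside $K$ lies in $U_i$ for only a small fraction of agents.

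We then want to choose $d$ agents and a partition of $K$ (or of a suitable subset of size $c+d$) into $d$ parts, each contained in a part of the corresponding agent's partition. Since every singleton is acceptable, the remaining $n-d$ agents can each be given one remaining item arbitrarily. This suggests a simpler route: find a single set $K$ and $d$ agents covering it. For each agent $i$, restrict $P^i$ to $K$. Then $K$ is partitioned by $P^i$ into pieces, at most $|K|$ of them, each acceptable for $i$.

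We need $d$ distinct agents whose restricted partitions use exactly $d$ pieces in total while covering $K$, with $|K| - d$ equal to $c$. To arrange this, choose $K = U_{i_0}$-like sets of size $c+d$ that many agents split into only $d$ pieces. By pigeonhole over agents there is a set of size at most $2c$ whose restricted partitions are shared by many agents. Then apply the counting argument of Theorem~\ref{thm:arbitraryYes}, with popular and unpopular parts and parts ordered by size, to the family of partitions of this $O(c)$-item set. This bounds the number of agents at $2^{(2+o(1))c}$, up to the factors lost in the reduction.

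The main obstacle is making the reduction lose little. A crude version takes $K$ of size about $2c$ and needs $n \ge 2^{(2+o(1))c}\cdot\mathrm{poly}(c)$, giving $c \approx 0.5\log n$. Reaching $0.52$ requires a slightly sharper count in the popular/unpopular argument, for instance using that parts of size $1$ are automatically popular, or the bound $\sum_j (j-1)\sum_{s_{j-1}<i\le s_j}\binom{m}{i}$. I would first get the $0.5$ version working and then tune the count.
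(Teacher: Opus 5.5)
Your reduction step has a genuine gap. The Markov count is fine: it gives a set $K$ of $O(c)$ items such that every item outside $K$ lies in $U_i$ for only a small fraction of agents. The popular/unpopular counting then needs many agents with $U_i\subseteq K$, because only for such an agent does the restriction of $P^i$ to $K$ have exactly $|K|-c$ pieces. If $U_i\not\subseteq K$, the restriction has more pieces, and for chores you cannot merge pieces without raising cost. This is where chores differ from goods: Lemma~\ref{lem:ncApplication} uses $MMS(K,v_a,k-c)\ge MMS(\items,v_a,n)$ for every agent, and that fails for chores.

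Nothing in your argument produces such agents. Pigeonholing over the candidate sets $U_i$ gives nothing, since there are about $\binom{n+c}{2c}\gg n$ of them. Concretely, suppose every item lies in $U_i$ for at most $2c$ agents, for instance with each $U_i$ a window of $2c$ cyclically consecutive items shifted by $i$. Then every item is rare, and any $K$ of size $O(c)$ contains $U_i$ for at most $|K|\cdot 2c=O(c^2)$ agents. That is far below the exponential number your count requires. Items that are non-singleton for only a few agents are exactly the obstacle, and your plan has no mechanism for them.

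The missing idea is the paper's peeling procedure. Suppose an item $e$ lies in $U_i$ for at least one but fewer than a $\gamma$ fraction of the currently active agents. Give one such agent $i$ its whole part $S\ni e$ (so $|S|\ge 2$), and remove $i$ and $S$. Keep active only the agents for which $\{e\}$ is a part. For each of them, deleting $S$ from its partition and dropping the emptied part $\{e\}$ gives an $(n-1)$-partition of the remaining items that is no costlier, so its MMS does not increase.

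Each such step lowers $c$ by $|S|-1\ge 1$ and shrinks the active set by a factor of at most $1-\gamma$. Agents that become inactive are later given a single item or nothing, which is acceptable by your own observation.

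When the process stops, every item is either a singleton part for all active agents or lies in $U_i$ for at least a $\gamma$ fraction of them. The latter set $S_t$ has size at most $2c_t/\gamma$, and every active agent partitions it into exactly $|S_t|-c_t$ parts. Only at this point does the argument of Theorem~\ref{thm:arbitraryYes} apply, needing roughly $\binom{2c_t/\gamma}{c_t/2}$ active agents. The constant $0.52$ comes from balancing this against the $(1-\gamma)^{-c}$ loss from peeling, with $\gamma\approx 0.73$. It does not come from sharpening the popular/unpopular count, as your last paragraph suggests.
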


\begin{proof}
    Consider an allocation instance with monotone valuations over chores, with a set $\agents_1$ of $n_1$ agents and a set $\items_1$ of $n_1+c_1$ items, with $n_1 > c_1$.

    In our proof, the sets $\agents_1$ and $\items_1$ may change, where after step $t$ in the proof they become $\agents_{t+1}$ and $\items_{t+1}$. We shall have $n_t = n_1 - t + 1$ and $c_t \le c_1 - t + 1$. In addition we will keep track of a subset $\agents'_t \subset \agents_t$ of {\em active agents}. Initially, $\agents'_1 = \agents_1$. At every step $t$, the number of parts in an MMS partition of active agents is $|\agents_{t}|$  (not to be confused with $|\agents'_{t}|$). We also fix some $0 < \gamma < 1$, to be optimized later. We assume that initially $(1-\gamma)^{c_1} n_1 \ge 1$

    Fix an arbitrary step $t$.
    
    Consider an arbitrary active agent $i \in \agents'_{t}$. As $n_t > c_t$, some parts in $i$'s MMS partition are {\em singletons} (contains one item), and others have larger size. Call an item {\em $i$-large} if it is one of the singletons of the MMS partition, and {\em $i$-small} otherwise. At least $n_t-c_t$ items are $i$-large, and at most $2c_t$ items are $i$-small.

Call an item {\em large} if it is $i$-large for every $i \in \agents'_t$.  Call an item {\em often-small} if it is $i$-small for at least $\lceil \gamma |\agents'_t| \rceil$ agents in $\agents'_t$. The remaining items (those that are small for between~1 and  $\gamma |\agents'_t|$ agents in $\agents'_t$) are called {\em sometimes-small}. 

    Suppose first that there is a sometimes-small chore $e$. Let $\agents^e_t$ denote the set of agents in $\agents'_t$ for which $e$ is large, and note that $|\agents^e_t| > (1 - \gamma)|\agents'|$. Let $i\in \agents'_t$ be an agent for which $e$ is $i$-small, and let $S$ be the bundle in agents $i$ MMS partition that contains $e$. As $e$ is $i$-small, we have that $|S| \ge 2$. Give $S$ to $i$. 
    
    Consider the allocation instance that remains for step $t+1$    
    We have that $\agents_{t+1} = \agents_t \setminus \{i\}$. Thus $n_{t+1} = n_t - 1$. We also have that $\items_{t+1} = \items_t \setminus S$. This implies that $c_{t+1} = c_t - |S| + 1 \le c_t - 1$. As for active agents, we define $\agents'_{t+1} = \agents^e_t$, and consequently, $|\agents'_{t+1}| > (1 - \gamma)|\agents'_t|$. Importantly, for every agent $j \in \agents'_{t+1}$, $MMS(\items_{t+1}, |v_j|, n_{t+1}) \le MMS(\items_t, |v_j|, n_t)$. This is because the following is a feasible option for an $MMS(\items_{t+1}, |v_j|, n_{t+1})$ partition: 
    take the  $MMS(\items_t, |v_j|, n_t)$ partition and remove from it all items of $S$. As $e \in S$ and $e$ was a singleton in the  $MMS(\items_t, |v_j|, n_t)$ partition, we can drop the singleton part $\{e\}$ (thus the partition now has $n_{t+1} = n_t - 1$ parts), and no other part gained any new chore (so its cost could not increase). 

    The above procedure ends at a step $t$ in which no item is sometimes-small. The following claim summarizes the situation at that step.

    \begin{claim}
    \label{cl:chores}
        For every $t$, $|\agents_t| = n_t = n_1-t+1$, $c_t \le c_1 - t + 1$, $|\items_t| = n_t + c_t$, and there is a set $\agents'_t$ of active agents, with $|\agents'_t| \ge (1 - \gamma)^{t-1}n_1$. The above facts imply that $(1-\gamma)^{c_t} |\agents'_t| \ge 1$. For every agent $j\in \agents'_t$, $MMS(\items_{t}, v_j, n_t) \le MMS(\items_1, v_j, n_1)$.
    \end{claim}

    At step $t$, every item is either large or often-small. We may assume that some items are often-small. (Otherwise, all items are large, implying that {$n_t \ge |\items_t|$}. Giving each agent of $\agents_t$ at most one item, every agent gets a bundle of cost at most her MMS.) Let $S_t$ denote the set of often-small items.

    We first provide an upper bound on $|S_t|$. Every agent in $\agents'_t$ contributes at most $2c_t$ items to $S_t$. Every item in $S_t$ needs to be contributed by at least $\gamma |\agents'_t|$ agents. Hence, $|S_t| \le \frac{2c_t |\agents'_t|}{\gamma |\agents'_t|} \le \frac{2c_t}{\gamma}$.

    Let $d = |S_t| - c_t$. The $MMS(\items_t, v_i, n_t)$ partition of each agent $i\in \agents'_t$ partitions $\items_t \setminus S$ into $|\items_t \setminus S| = n_t - d$ singletons, and hence it partitions $S_t$ into $d$ parts.

    Assume that there is partition of $S_t$ into $d$ parts $T_1, \ldots, T_d$ and a choice of $d$ different agents $a_1, \ldots, a_d$ in $\agents'_t$, such that for every $1 \le i \le d$, $T_i$ is a part in the MMS partition of $a_i$.  Then the instance has an MMS allocation (the remaining agents each gets a single item). 

    So now we can proceed as in the proof of Theorem~\ref{thm:arbitraryYes}, so as to show that if $|\agents'_t|$ is sufficiently large, the assumption holds. The second largest part in a $d$-partition of $S_t$ has size at most $\frac{|S_t| - (d-2)}{2} \le \frac{c_t}{2} + 1 \simeq \frac{c_t}{2}$. Similar to the computations leading to Equation~\ref{eq:N} in the proof of Theorem~\ref{thm:arbitraryYes}, we conclude that is suffices that $|\agents'_t| \ge \otilde(\binom{|S_t|}{c_t/2}) \le \otilde(\binom{2c_t/\gamma}{c_t/2}) \simeq \otilde(2^{2c_t H(\gamma/4)/\gamma})$, where $H$ is the binary entropy function.

    Recall that we also have that $|\agents'_t| \ge 1/(1 - \gamma)^{c_t}$. Thus, up to low order terms, it suffices that we choose $\gamma$ such that $\frac{1}{1 - \gamma} \ge 2^{2 H(\gamma/4)/\gamma}$. This gives $\gamma \simeq 0.73$, and $|\agents_t| \simeq \frac{1}{0.27^{c_t}} \simeq 2^{1.9 c_t}$.

    Consequently, it suffices that $n_1 \ge \otilde(2^{1.9 c_1})$, which holds for $c_1 \le (0.52 - o(1)) \log n_1$.
\end{proof}

\section{Discussion}

It is an interesting open question whether for monotone valuations over goods, $\mu(n) = n + \log n + o(\log n)$. This relates to tightening the bounds on $\alpha(q)$ from Definition~\ref{def:alpha}.
Ideas from the proof of Theorem~\ref{thm:graph} can be used in order to improve over the bounds of Lemma~\ref{lem:alpha} for general hypergraphs.
However, handling hypergraphs is considerably more complicated than handling graphs. In particular, maximum matchings do not enjoy structural results similar to the  Gallai–Edmonds decomposition for graphs. We were only able to obtain minor improvements over the bounds of Lemma~\ref{lem:alpha}, and leave it as an open question whether $\alpha(q) \le (2 + o(1))^q$.

Limiting the number of goods to $ n + (1 + o(1))\log n$ is necessary if one wants to guarantee existence of MMS allocations, even in the special case of submodular valuations (Theorem~\ref{thm:submodular}).
However, in the important special case of additive valuations, $n + (1 + o(1))\log n$ is not a barrier. 
In Section~\ref{sec:additive} in the appendix we prove the following theorem.

\begin{theorem}
    \label{thm:additive}
    For additive valuations, $\mu(n) \ge n + (1.088 - o(1))\log n$.
\end{theorem}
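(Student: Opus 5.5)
Theorem~\ref{thm:additive} will follow the same two-step structure as the proof for general monotone valuations over goods. First we apply Lemma~\ref{lem:ncApplication}. It reduces the problem to bounding $n_c$ of Definition~\ref{def:nc}, and it holds for all monotone valuations, hence for additive ones. It therefore suffices to prove that, \emph{restricted to additive valuations}, $n_c \le (b+o(1))^c$ with $b = 2^{1/1.088} \approx 1.89$. Choosing $c$ largest with $n \ge \max[20c, (b+o(1))^c + 4c]$ then gives $c \approx \log n/\log b = (1.088-o(1))\log n$.

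To bound $n_c$ for additive valuations, first apply the result of~\cite{bouveret2016characterizing}: we may assume all agents share the same order over items, $e_1 \succeq e_2 \succeq \cdots \succeq e_{c+d}$. This assumption is what breaks the submodular lower bound of Theorem~\ref{thm:submodular}. There, each agent's two acceptable halves of $S$ are arbitrary. Under same-order additive valuations, acceptability is monotone under replacing an item by a better one. So the relevant family of ``minimal acceptable bundles'' of each agent is an up-set in the dominance order. Concretely, in the $d$-partition setting with all parts of size at least~2, each agent has at most $d \le c$ parts and at most $2c$ items.

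The plan is to run the counting argument of Theorem~\ref{thm:arbitraryYes} (the popular/unpopular classification), but to count agents by the \emph{dominance type} of a critical bundle rather than by the bundle itself. Assume no sub-instance of $d$ agents has an MMS allocation. Build the allocation greedily: repeatedly let an agent take a part of her MMS partition that is best possible, e.g.\ the part containing the currently lowest-ranked remaining item, together with the smallest set of higher items needed. Under same-order additivity, every remaining agent's residual MMS does not drop. This is the additive analogue of the ``sometimes-small'' step for chores and of the free-pair step for goods. The greedy can fail at step $j$ only if fewer than $j$ agents accept a certain canonical bundle. The bundle is determined by a position vector, i.e.\ a subset of a $2c$-element ordered set. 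We then count the canonical bundles that can be unpopular and that have size roughly at most $c/2$ within the lower half of the order, giving a sum of binomials.

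The main obstacle is showing that these canonical bundles are substantially fewer than the $\approx 2^{c}$ pairs used in the general case, enough to lower the base to $\approx 1.89$. My first attempt would be a canonical form in which each unpopular bundle consists of the item $e_{c+d}$ plus a ``complementary'' set chosen from the top items in a lexicographically constrained way. The lattice-path structure of the dominance order should then restrict the count to something like $\binom{2c}{\beta c}$ with $\beta < 1/2$ optimized numerically. If this alone does not reach $1.088$, I would combine it with the $\alpha(q)$ machinery of Lemma~\ref{lem:hypergraphReduction}. The same-order structure would force the free-pair hypergraphs to be ``shifted'' families, and for these the anti-chain count in Definition~\ref{def:alpha} could be bounded by a smaller exponential via an entropy calculation.
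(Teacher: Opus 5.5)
There is a genuine gap: the step that lowers the base from $2$ to about $1.89$ is never supplied. Your reduction step is fine and matches the paper's outline. It suffices to bound $n_c$ for additive valuations, since Lemma~\ref{lem:ncApplication} only invokes Definition~\ref{def:nc} on restrictions of the original valuations, and restrictions of additive valuations are additive. The paper uses the sharper Lemma~\ref{lem:ncApplicationAdditive}, but that only affects lower-order terms. One may also assume same-order preferences.

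Your greedy, however, relies on the claim that when an agent takes the part of her MMS partition containing the lowest remaining item, no other agent's residual MMS drops. This is unproved, and it is false even in the setting of Definition~\ref{def:nc}. Take $9$ items and $d=3$.
\begin{itemize}
\item Agent $A$ values $e_1,\dots,e_4$ at $5$ and $e_5,\dots,e_9$ at $2$. Every MMS partition of $A$ pairs up $e_1,\dots,e_4$ and has $\{e_5,\dots,e_9\}$ as a part.
\item Agent $B$ values $e_1,e_2,e_3$ at $3$ and the rest at $1$. Her MMS is $5$, attained only by partitions of the type $\{e_1,e_4,e_5\},\{e_2,e_6,e_7\},\{e_3,e_8,e_9\}$.
\end{itemize}
After $A$ takes $\{e_5,\dots,e_9\}$, the two-part MMS of $B$ on $\{e_1,\dots,e_4\}$ is $4<5$.

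More importantly, the count you describe (canonical subsets of size at most about $c/2$ among roughly $c$ lower items) is about $2^c$, so it gives no gain. The hoped-for $\binom{2c}{\beta c}$ bound, and the fallback via ``shifted'' free-pair families, come with no mechanism that would produce a smaller exponential.

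The paper stays inside the free-pair framework of Lemma~\ref{lem:hypergraphReduction} and Definition~\ref{def:alpha}, and adds two additive-specific constraints.
\begin{itemize}
\item First (Proposition~\ref{pro:enhancedAntiChain}), the unions $S_i\cup T_i$ also form an antichain. If $S_j\cup T_j\subseteq S_i\cup T_i$, additivity gives $v_j(S_i)+v_j(T_i)\ge v_j(S_j)+v_j(T_j)$. So agent $j$ accepts $S_i$ or $T_i$, and one obtains an MMS allocation for $d$ agents, contradicting Assumption~\ref{ass:N}. Hence $\{S_i\}$, $\{T_i\}$ and $\{R_i\}$, with $R_i=V\setminus(S_i\cup T_i)$, are all antichains.
\item Second (Proposition~\ref{pro:additiveU}), if $U$ is the set of the top $p$ items in the common order, then $|(S_i\cup T_i)\cap U|\le 2\sqrt q$.
\end{itemize}
Fix one of the $2^{O(\sqrt q\log q)}$ possible traces on $U$. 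Then $S_i\setminus U$, $T_i\setminus U$, $R_i\setminus U$ partition the $q$ items of $V\setminus U$, so one of them has at most $q/3$ items. Within each of the three resulting classes these sets are distinct. This yields
\[
\alpha(q)\le 3\cdot 2^{O(\sqrt q\log q)}\sum_{j\le q/3}\binom{q}{j}=(2^{H(1/3)}+o(1))^q,
\]
with $2^{H(1/3)}\approx 1.8899$. This is exactly the source of $1.088\approx 1/\log 1.89$. Your proposal has neither the third antichain (the complements $R_i$) nor the small trace on the top items, and without them there is no route to $H(1/3)$.

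One further caution about your fallback. Lemma~\ref{lem:hypergraphReduction} also carries the term $(1.91+o(1))^c$ from the classes $C_j$ with $j\ge 3$; at $j=3$ it is about $1.905^c$ for the worst $d$. This already exceeds $2^{c/1.088}\approx 1.891^c$. So along that route, the bound on those classes must also be sharpened for additive valuations, not only the bound on $\alpha(q)$.
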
  

At the time when Theorem~\ref{thm:additive} was proved, the known upper bound for additive valuations was $\mu(n) \le 3n + O(1)$.
Very recently, it was shown in~\cite{ezra2026} that $\mu(n) \le n + O(\log n)$ for additive valuations. 

\subsection*{Acknowledgements}

This research was supported in part by the Israel Science Foundation (grant No. 1122/22).
The process of improving the constants in terms of the form $\Omega(\log n)$ was assisted by a version of ChatGPT, based on OpenAI’s GPT-5.3–class model. 

\bibliographystyle{alpha}


\newcommand{\etalchar}[1]{$^{#1}$}

\begin{appendix}
    \section{An extremal problem on hypergraphs}
    \label{sec:goods}

    In this section, we provide the full proof of Lemma~\ref{lem:ncUpperBound}. The proof is based on analysing an extremal problem on hypergraphs, presented in Definition~\ref{def:alpha}.

    We restate and prove the lemma that connects between $\alpha(q)$ of Definition~\ref{def:alpha} and $n_c$ of Definition~\ref{def:nc}.

\hypergraphReduction*

\begin{proof}
Consider the setting of Definition~\ref{def:nc}. $\items$ is a set of $c + d$ items, there are $N$ agents, and each agent has a partition of $\items$ into $d$ parts, each with at least two items. We refer to the collection of $d$-partitions of the $N$ agents as $C$. 

\begin{assumption}
    \label{ass:N}
    Under the valuations of the agents, the conclusion of Definition~\ref{def:nc} fails to hold (there is no subset of $d$ agents for which $\items$ gives an MMS allocation).
\end{assumption}

Based on this assumption, we provide an upper bound on $N$. Adding one to this upper bound, we get an upper bound on $n_c$.

For an integer $j\ge 1$, call a set $S\subseteq \items$ \emph{$j$-good}
if there are at least $j$ partitions in $C$ such that, in each of these
partitions, at least one part is contained in $S$. Otherwise call $S$
\emph{$j$-bad}. We say that a partition
\[
Q=(T_1,\ldots,T_d)\in C
\]
is {\em good}, if after ordering its parts in non-increasing order of size, every
part $T_j$ is $j$-good. Assumption~\ref{ass:N} implies that no partition $Q \in C$ is good. This is because for a good $Q$, visiting the indices $j$ in increasing order, we can allocate each $T_j$ to a different agent that desires it, satisfying the conclusion of Definition~\ref{def:nc}. 

Order the parts of every partition in $C$ in non-increasing order of size.
The first part is always $1$-good, because it contains itself and hence is
witnessed by the partition in which it appears. Therefore, for every
partition in $C$, there exists some index $j\ge 2$ such that the $j$th part
is $j$-bad.

Classify all partitions of $C$ into $d-1$ classes, $C_2, \ldots, C_d$. 

Class $C_2$ contains all the partitions for which all parts but the first two are $d$-good. Observe that every partition in $C_2$, the second part is necessarily 2-bad, and the last $d-2$ parts are $d$ good.

For any partition in $C \setminus C_2$, let $j \ge 3$ be an arbitrary index such that its $j$th part is $d$-bad (such an index must exist), and place the partition in $C_j$.

Fix a partition
\[
Q=(T_1,\ldots,T_d)\in C_j.
\]

We first bound the size of $T_j$. Since the parts are ordered in
non-increasing order, the first $j$ parts each have size at least
$|T_j|$, and the remaining $d-j$ parts each have size at least $2$.
Thus
\[
j|T_j|+2(d-j)\le c+d.
\]
Consequently
\[
|T_j|\le \frac{c+d-2(d-j)}{j}
=
\frac{c-d}{j}+2,
\]
and therefore
\[
|T_j|\le s_j:=\left\lfloor\frac{c-d}{j}\right\rfloor+2.
\]

Observe that $N = \sum_{j=2}^d |C_j|$. For $j \ge 3$, the following crude upper bound on $|C_j|$ will suffice for our purpose. 

There are at most $\sum_{i=2}^{s_j} \binom{c+d}{i}$ subsets of $\items$ of size at most $s_j$. Since each subset is $d$-bad, it appears as a part in fewer than $d$ partitions, and hence can be chosen at most $d-1$ times in the above selection. Therefore, 
$$|C_j| \le (d-1)\sum_{i=2}^{s_j} \binom{c+d}{i} \le (d-1)s_j \binom{c+d}{s_j} \le \frac{(d-1)(c-d)}{j}\binom{c+d}{\frac{c-d}{j}}.$$ 

For all $j \ge 3$, $\binom{c+d}{\frac{c-d}{j}} \le (1.91 + o(1))^{c}$. (This was computed numerically. The worst choice of $j$ is $j = 3$, and then the worst choice of $d$ is around $0.032 c$.) Thus, we have that:

$$N = |C_2| + \sum_{j=3}^d |C_j| < |C_2| + c^3(1.91 + o(1))^{c} \le |C_2| + (1.91 + o(1))^{c}$$

To prove the lemma, it remains to relate $|C_2|$ to $\alpha(q)$, for $q= c+2$. We explain how to construct the $(q,p)$ hypergraph required by Definition~\ref{def:alpha} and the anti-chain of free pairs. 

Set $q = c+2$ and $p = d - 2$. The vertices of the hypergraph $G(V,E)$ are the items of $\items$. Hence, the number of vertices is $c+d = q + p$, as desired. The hyperedges of $G(V,E)$ are derived from the partitions of $C_2$. Each partition provides $d-2 = p$ hyperedges that form a matching in $G$. The same hyperedge may be provided by several partitions. Specifically, from each $d$-partition  $Q = (T_1, \ldots, T_d) \in C_2$, the items in each one of the parts $T_3, \ldots, T_d$ form a hyperedge. Thus, $G$ necessarily has a matching of size $d-2 = p$. The size of its maximum matching is at most $p+1$, because if it has a matching of size $p+2 = d$, Assumption~\ref{ass:N} is violated. (As each of the $d$ hyperedges of the matching is desired by at least $d$ agents, there is a partition of $\items$ into $d$ disjoint bundles (each matching hyperedge in a different bundle), and an allocation of the parts of this partition to $d$ agents, so that each of these agents desires that part that she receives.)

For the anti-chain of free pairs, we take the first two parts of every partition in $C_2$. That is, from every  $Q=(T_1,\ldots,T_d)\in C_2$, we take the pair $(T_1, T_2)$. The sub-hypergraph induced on $V \setminus (T_1 \cup T_2)$ contains a matching of size $p$, namely, the hyperedges $T_3, \ldots, T_d$. It does not contain a matching of size $p+1$, as then Assumption~\ref{ass:N} is violated. Moreover, $T_2$ is an independent set in $G$ by definition of $C_2$, and so is $T_1$ (otherwise, Assumption~\ref{ass:N} is violated). Hence, indeed $(T_1,T_2)$ is a free pair in $G$.

Finally, we verify that the collection of free pairs is an anti-chain. Suppose for contradiction that there is some $Q'=(T'_1,\ldots,T'_d)\in C_2$ such that the pairs derived from $Q$ and $Q'$ violate the anti-chain property. For concreteness, suppose that $T'_2 \subset T_1$ (other cases are treated similarly). Then we can give $T_1$ to the agent having partition $Q'$, $T_2$ to the agent having partition $Q$, and each of $T_3, \ldots, T_d$ to distinct other agents (as each of these parts is desired by at least $d$ agents), violating Assumption~\ref{ass:N}.

Thus, $D$ has an anti-chain of size $|C_2|$, implying by Definition~\ref{def:alpha} that $|C_2| \le \alpha(q)$. This implies that $N \le \alpha(q) + (1.91 + o(1))^{c}$, proving the lemma.
\end{proof}

We first present a crude upper bound on $\alpha(q)$, similar to the crude upper bound on $|C_2|$ that is given in the proof of Lemma~\ref{lem:hypergraphReduction}

\begin{claim}
    \label{cl:alpha}
    $\alpha(q) \le (2.148 + o(1))^q$.
\end{claim}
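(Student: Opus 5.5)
The plan is to bound the number of free pairs in an anti-chain by counting their smaller members, which both determine the pair and are small. This mirrors the crude bound on $|C_j|$ in the proof of Lemma~\ref{lem:hypergraphReduction}.

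Fix a $(q,p)$ hypergraph $G(V,E)$ and an anti-chain $(S_1,T_1),(S_2,T_2),\ldots$ of free pairs.

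First, bound the sizes. For a free pair $(S,T)$, the sub-hypergraph induced on $V\setminus(S\cup T)$ has a matching of size $p$. Every hyperedge contains at least two vertices, so $|V\setminus(S\cup T)|\ge 2p$. Hence
\[
|S|+|T|\le (q+p)-2p=q-p .
\]
Relabel each pair so that $|S_i|\le |T_i|$. Then $|S_i|\le \lfloor (q-p)/2\rfloor$.

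Second, use the anti-chain property to get injectivity. The anti-chain condition forbids any member set of one pair from being equal to a member set of another pair. So the map $(S_i,T_i)\mapsto S_i$ is injective: if two distinct pairs shared their smaller member, that set would be equal to a set from another pair. (Within a single pair, $S_i$ and $T_i$ are disjoint, so there is no issue.) Therefore
\[
\alpha(G)\le \sum_{i=0}^{\lfloor (q-p)/2\rfloor}\binom{q+p}{i}\le \otilde\!\left(\binom{q+p}{\lfloor (q-p)/2\rfloor}\right),
\]
where the last step uses $(q-p)/2<(q+p)/2$, so the largest term dominates up to a polynomial factor. One could sharpen this with Sperner/LYM, since the $S_i$ themselves form an antichain, but the plain count should already suffice.

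Finally, optimize over $p$. Write $p=xq$ with $0\le x\le 1$. Up to polynomial factors, the bound becomes $2^{q\,f(x)}$ with
\[
f(x)=(1+x)\,H\!\left(\frac{1-x}{2(1+x)}\right),
\]
where $H$ is the binary entropy function. A quick numerical check gives $f(0)=1$, $f(0.2)\approx 1.1017$, and a maximum near $x\approx 0.22$ of about $1.1027$. This gives $2^{1.1027}\approx 2.147<2.148$, so $\alpha(q,p)\le (2.148+o(1))^q$ for every $p$. Taking the maximum over $p$ yields the claim.

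The only real work is this last one-variable numerical maximization. The structural steps (the size bound and injectivity) are immediate from Definition~\ref{def:alpha}. The main thing to get right is that the slack is consumed by the matching on the complement, giving $|S|+|T|\le q-p$ and not merely $q+p$; without that, the bound would be of order $4^q$ rather than about $2.148^q$.
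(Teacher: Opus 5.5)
Your proposal is correct and follows the paper's proof essentially step for step: the bound $|T_i|\le (q-p)/2$ because the complement of a free pair carries a $p$-matching, injectivity of the smaller member via the anti-chain property, and the entropy optimization of $(1+\gamma)H\!\left(\frac{1-\gamma}{2(1+\gamma)}\right)$ with maximizer near $\gamma\approx 0.223$. One numerical note: the true maximum of your $f$ is about $1.10293$ (so $2^{f}\approx 2.1479$, not $2.147$), which still lies below $\log_2 2.148\approx 1.10299$, so the claim holds, but with a much thinner margin than your figures suggest.
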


\begin{proof}
Let $G(V,E)$ be an $(q,p)$ hypergraph and let $C$ be an anti-chain of free pairs (as in Definition~\ref{def:alpha}).
For every free pair $(S_i,T_i) \in C$, we assume without loss of generality that $|T_i| \le |S_i|$. Definition~\ref{def:alpha} implies that $|T_i| \le \frac{q+p - 2p}{2} = \frac{q-p}{2}$. 
{As the $T_i$ are all distinct,  we have that $|C| \le \sum_{i=1}^{(q-p)/2} \binom{q+p}{i} < q\binom{q+p}{\frac{q-p}{2}}$.}
{(In fact, the $T_i$ form an anti-chain of sets, which implies the stronger bound of $|C| \le \binom{q+p}{\frac{q-p}{2}}$ (by a certain inequality known as the  Lubell–Yamamoto–Meshalkin inequality). However, this saving of a multiplicative term of $q$ has only negligible effect on the computations that follow -- it is absorbed in the final $o(1)$ term.)} 

Write $p = \gamma q$ with $\gamma\in[0,1]$. Then
\[
\binom{q+p}{\frac{q-p}{2}}
= \binom{(1+\gamma)q}{\frac{1-\gamma}{2}q}.
\]
Using the standard entropy estimate,
\[
\binom{N}{\alpha N}
= \exp\!\big(N H(\alpha) + o(N)\big),
\]
we obtain
\[
\binom{(1+\gamma)q}{\frac{1-\gamma}{2}q}
= \exp\!\left((1+\gamma)q \cdot H\!\left(\frac{1-\gamma}{2(1+\gamma)}\right) + o(q)\right).
\]

This expression is maximized when $\gamma\simeq 0.223$, giving
\[
\max_{\gamma} (1+\gamma) H\!\left(\frac{1-\gamma}{2(1+\gamma)}\right)
< \log(2.148),
\]
and hence
\[
\alpha(q) \le (2.148 + o(1))^q.
\]
\end{proof}

We restate and prove our improved upper bound on $\alpha(q)$. 

\alphaBound*

\begin{proof}
Fix an $(q,p)$ hypergraph $G=(V,E)$, and let
\[
|V|=q+p.
\]
Let
\[
M=\{e_1,\ldots,e_p\}
\]
be a matching of size $p$ in $G$ whose union of hyperedges contains the smallest number of vertices. Write
\[
r_i=|e_i|,\qquad R=\sum_{i=1}^p r_i.
\]
Thus \(r_i\ge 2\), and \(R\) is the number of vertices covered by the matching \(M\).

Let
\[
\{(S_1,T_1),\ldots,(S_N,T_N)\}
\]
be an independent collection of free pairs, with
$|S_i|\ge |T_i|$
for every \(i\).

Since \((S_i,T_i)\) is a free pair, the hypergraph induced on
\[
V\setminus (S_i\cup T_i)
\]
contains a matching of size \(p\). In particular, it contains at least \(R\) vertices. Hence
\[
|S_i\cup T_i|\le q+p-R.
\]
Since \(|S_i|\ge |T_i|\), we get
\[
|T_i|\le \frac{q+p-R}{2}.
\]

Recall from Definition~\ref{def:alpha} that no \(T_i\) can contain a whole edge of the fixed matching \(M\). 

Therefore each \(T_i\) is a subset of \(V\) satisfying:

\[
|T_i|\le \frac{q+p-R}{2},
\]

and

\[
e_j\nsubseteq T_i
\qquad\text{for every }j=1,\ldots,p.
\]

The sets \(T_1,\ldots,T_N\) form an antichain. Up to a multiplicative factor of $\frac{q+p-R}{2} < \frac{q}{2}$, we may restrict attention to those \(T_i\)'s of one fixed size \(t\), where

\[
t\le \frac{q+p-R}{2}.
\]

For a fixed size \(t\), the number of possible choices for \(T_i\) is at most the coefficient of $x^t$ in the following generating function:

\[
(1+x)^{q+p-R}
\prod_{j=1}^p\left((1+x)^{r_j}-x^{r_j}\right).
\]

The factor
\[
(1+x)^{q+p-R}
\]
corresponds to the vertices not covered by \(M\). For a matching edge \(e_j\) of size \(r_j\), the set \(T_i\) may contain any proper subset of \(e_j\), but not all of \(e_j\). Hence the local generating function is

\[
\sum_{a=0}^{r_j-1}\binom{r_j}{a}x^a
=
(1+x)^{r_j}-x^{r_j}.
\]

Thus, for every \(x>0\), the coefficient of $x^t$ is at most

\[
\frac{
(1+x)^{q+p-R}
\prod_{j=1}^p\left((1+x)^{r_j}-x^{r_j}\right)
}{x^t}.
\]

Since the optimum occurs for \(x<1\), and \(t\le (q+p-R)/2\), this is at most

\begin{equation}
\label{eq:coefficient}
\frac{
(1+x)^{q+p-R}
\prod_{j=1}^p\left((1+x)^{r_j}-x^{r_j}\right)
}{x^{(q+p-R)/2}}.
\end{equation}

We intend to eventually choose $x < 0.6$. 

\begin{claim}
    For $x < 0.6$, the expression (\ref{eq:coefficient}) is maximized when $r_j = 2$ for all $j$.
\end{claim}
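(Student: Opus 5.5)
Fix \(x\in(0,0.6)\) and treat the expression (\ref{eq:coefficient}) as a function of the sizes \(r_1,\ldots,r_p\) with \(p\) fixed. Writing \(R=\sum_j r_j\), I will split the expression into one factor per matching edge:
\[
\left(\frac{1+x}{\sqrt{x}}\right)^{q+p}\prod_{j=1}^p f(r_j),
\qquad
f(r)=\frac{(1+x)^r-x^r}{\left(\frac{1+x}{\sqrt{x}}\right)^{r}}=x^{r/2}\left(1-\left(\frac{x}{1+x}\right)^r\right).
\]
With this form, each \(r_j\) can be optimized separately. It then suffices to show that, for \(x<0.6\), the function \(f(r)\) over integers \(r\ge 2\) is largest at \(r=2\).

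Comparing \(r\) and \(r+1\) directly gives \(f(r+1)/f(r)=\sqrt{x}\cdot\frac{1-y^{r+1}}{1-y^r}\), where \(y=x/(1+x)<1\). The second factor is decreasing in \(r\), since \(1-y^r\) is increasing and concave-like in \(r\). So the largest ratio occurs at \(r=2\), where it equals \(\sqrt{x}\,\frac{1-y^3}{1-y^2}=\sqrt{x}\,\frac{1+y+y^2}{1+y}\). It remains to check that this is below \(1\) for \(x<0.6\). At \(x=0.6\) we have \(y=0.375\), so the fraction is \(1.5156/1.375\approx1.102\) and \(\sqrt{0.6}\approx0.775\), giving a ratio of about \(0.854<1\). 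Both \(\sqrt{x}\) and \(y\) increase with \(x\), and \((1+y+y^2)/(1+y)\) increases in \(y\) on this range, so the ratio stays below \(1\) on all of \((0,0.6]\). Hence \(f\) is strictly decreasing on \(r\ge2\), and \(f(r)\le f(2)\) for every \(r\ge 2\).

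The steps, in order, are: (1) rewrite (\ref{eq:coefficient}) in the product form above, using \(q+p-R\) for the leftover exponent; (2) prove that \(f(r+1)/f(r)<1\) for \(r\ge2\) and \(x<0.6\) by the monotonicity argument just given; (3) conclude that setting every \(r_j=2\) maximizes the product. That gives \(R=2p\) and the bound \((1+x)^{q-p}\big((1+x)^2-x^2\big)^p/x^{(q-p)/2}\).

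The only step that needs care is (2), specifically justifying that \((1-y^{r+1})/(1-y^r)\) decreases in \(r\). This follows because it equals \(1+y^r(1-y)/(1-y^r)\), and \(y^r/(1-y^r)\) is decreasing in \(r\). The remaining concern is bookkeeping. Changing \(r_j\) also changes the denominator exponent \((q+p-R)/2\) and the uncovered-vertex factor, but the normalization by \(\left(\frac{1+x}{\sqrt{x}}\right)^{r}\) already accounts for both. A final check is that \(R\le q+p\) is preserved when the \(r_j\) are reduced to \(2\), which is immediate since reducing \(R\) only relaxes that constraint.
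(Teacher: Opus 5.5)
Your proof is correct and is essentially the paper's argument: the paper shows that lowering any $r_j\ge 3$ by one multiplies (\ref{eq:coefficient}) by a factor exceeding $1$. The worst case there is $r_j=3$, and the check is done at $x=0.6$ using monotonicity in $x$, which is exactly your $f(r+1)/f(r)<1$ with the largest ratio at $r=2$. Your normalization $f(r)=x^{r/2}(1-y^r)$ just makes the separability explicit, and you evaluate the ratio exactly ($f(2)/f(3)\approx 1.17$ at $x=0.6$) where the paper uses the cruder lower bound $\frac{1+2x^2}{\sqrt{x}(1+3x^2)}\approx 1.07$.
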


\begin{proof}
   By definition, $r_j \ge 2$, as the $r_j$ are sizes of hyperedges. 
   
   Suppose for the sake of contradiction that the maximum is attained when some $r_j \ge 3$. We contradict this by showing that decreasing $r_j$ to $r_j -1$ increases the value of the expression. 

    Decreasing $r_j$ to $r_j - 1$ also decreases $R$ by~1. The overall change in the value of (\ref{eq:coefficient}) is by a multiplicative factor of:

$$\frac{(1+x)((1 + x)^{r_j-1} - x^{r_j - 1})}{\sqrt{x}((1+x)^{r_j}-x^{r_j})} = \frac{(1 + x)^{r_j} - x^{r_j} - x^{r_j-1}}{\sqrt{x}((1+x)^{r_j}-x^{r_j})}$$
    
   For $0 < x < 1$ the term  $(1 + x)^{r_j} - x^{r_j}$ is increasing with $r_j$, whereas the term $x^{r_j-1}$ is decreasing with $r_j$. Hence the change is smallest when $r_j$ is smallest, namely, $r_j = 3$. Thus the change is by a multiplicative factor of at least $\frac{(1 + x)^3 - x^3 - x^2}{\sqrt{x}((1+x)^3-x^3)}$. As $(1 + x)^3 - x^3 > 1 + 3x^2$, this last expression is at least $\frac{1 + 2x^2}{\sqrt{x}(1+3x^2)}$. This expression decreases with $x$, and at $x = 0.6$ its value is roughly $1.07 > 1$.
\end{proof}

Replacing every $r_j$ by~2 and $R$ by $2p$ in expression (\ref{eq:coefficient}), we get

\[
\frac{
(1+x)^{q-p}
\prod_{j=1}^p\left((1+x)^2-x^2\right)
}{x^{(q-p)/2}}.
\]

Write $p = \gamma q$, with $0\le \gamma \le 1$. 

The remaining optimization is therefore

\[
\max_{0\le \gamma\le 1}
\inf_{0 < x < 0.6}
\frac{(1+x)^{1-\gamma}(1+2x)^\gamma}
{x^{(1-\gamma)/2}}.
\]

Optimizing this expression numerically gives:

\[
\gamma \approx 0.21947,
\qquad
x\approx 0.546,
\]

and the resulting value is slightly less than 2.0922.

Therefore

\[
\alpha(q)\le (2.0922+o(1))^q.
\]
\end{proof}

The combination of Lemmas~\ref{lem:hypergraphReduction} and~\ref{lem:alpha} implies Lemma~\ref{lem:ncUpperBound}.


The worst case instances for our analysis that provided upper bounds on $\alpha(q)$ (both for Claim~\ref{cl:alpha} and for Lemma~\ref{lem:alpha}) are when the hypergraph $G(V,E)$ is in fact a graph. In this natural special case, Theorem~\ref{thm:graph}, restated below, provides an upper bound of $\alpha(q) \le (2+o(1))^q$. This is tight up to low order terms. 

\graph*

\begin{proof}
Fix an \((q,p)\) graph \(G=(V,E)\), and let
\[
\{(S_1,T_1),\ldots,(S_N,T_N)\}
\]
be an antichain free pairs, with $|S_i|\ge |T_i|$ for every \(i\).
As explained in previous proofs,
\[
|T_i|\le \frac{q-p}{2}.
\]

By Definition~\ref{def:alpha}, the maximum matching in $G$ has size either $p$ or $p+1$. Assume first that its size is $p$.

Consider the {\em Gallai--Edmonds decomposition} of $G(V,E)$. It decomposes $V$ into a set $S$ of {\em separator} vertices, a collection $C_1,\ldots,C_k$ of {\em odd components} (with $c > |S|$), and a set $R$ of {\em remaining} vertices. Every odd component $C_i$ has an odd number of vertices, that we denote by $s_i$, and may have only internal edges and edges connecting it with $S$. In every maximum matching, all vertices in $R$ are matched (to each other, $|R|$ is even), all vertices in $S$ are matched (each to a vertex in a different odd component), and within each odd component $C_i$, there is an internal matching matching $s_i - 1$ vertices. We remark that $S$ and/or $R$ may be empty, and some of the odd components may contain only a single vertex.

Each \(s_j\) is an odd positive integer. In every maximum matching, at most one vertex from each component \(C_j\) is left unmatched. Therefore every admissible set \(T_i\) contains at most one vertex from each \(C_j\), and no other vertices.

We upper bound the number of sets in the antichain of the $T_i$'s. and this serves as an upper bound on $N$. We may assume that $|R|=0$, as otherwise, the vertices of $R$ can be inserted into one of the odd components. Likewise, we may assume that $|S|=0$. This is because otherwise, we can remove all vertices of $S$. This corresponds to an $(q,p')$ graph $G'$ with $p'$ (the maximum matching size) equal to $p - |S|$. Every \(T_i\) that is admissible in $G$ is also admissible in $G'$, so any upper bound on the antichain for $G'$ is an upper bound on that for $G$.

Thus, we may assume that $\sum_{i=1}^k s_i = q + p$ and $k = q - p$.

For fixed \(|T_i|=t\), the number of possible choices for \(T_i\) is
\[
e_t(s_1,\ldots,s_k) = \sum_{1 \le a_1 \le a_2 \ldots \le a_t \le k} s_{a_1}s_{a_2}\ldots s_{a_t},
\]
where \(e_t\) denotes the \(t\)-th elementary symmetric polynomial.

Let $\gamma = \frac{1}{k}\sum_{i=1}^k s_i$ be the average size of an odd component. For fixed \(\gamma\), the expression \(e_t(s_1,\ldots,s_k)\) is maximized
when the odd integers \(s_j\) are concentrated on two adjacent odd integers, $b$ and $b+2$, satisfying $b \le \gamma \le b+2$. (Suppose for the sake of contradiction that there are two components of odd sizes $b_1 \ge 1$ and $b_2 \ge b_1 + 4$. Then the number of ways of selecting one vertex from them is $b_1 + b_2$, and the number of ways of selecting two vertices is $b_1 b_2$. Changing their sizes to $b_1 + 2$ and $b_2 -2$,  the number of ways of selecting one vertex remains $b_1 + b_2$, but the number of ways of selecting two vertices increases to $(b_1+2)(b_2-2) = b_1 b_2 +2(b_2 - b_1) - 4 > b_1 b_2$.) 

Suppose first that $p \le \frac{q}{2}$, implying that $\gamma \le 3$. Then there are $q-2p$ components of size~1, and $p$ components of size~3. From each component of size~1, $T_i$ can either select its vertex or not (two possibilities), and from each component of size~3, $T_i$ can either select any one of its vertices or no vertex (four possibilities). So the number of possible options for $T_i$ is at most $2^{q-2p} \cdot 4^p = 2^q$.




Suppose now that $p > \frac{q}{2}$. In this case, the proof of Claim~\ref{cl:alpha} gives an upper bound of $(1.98 + o(1))^q$.

This completes the proof when the size of the maximum matching in $G$ is $p$. We now consider the remaining case, in which the maximum matching in $G$ has size $p+1$. The analysis for this case is also based on the Gallai-Edmonds decomposition of the graph, but differs from the previous case in the following aspect. When the maximum matching has size $p$, then $T_i$ is not allowed to contain any vertex from $S \cup R$, and can contain at most one vertex from each odd component. When the maximum matching has size $p+1$, additional vertices become eligible for $T_i$. Specifically, $T_i$ may contain one or two vertices from $S \cup R$, or alternatively, one or two vertices (beyond one that is already allowed) from some odd component (as the maximum matching in the graph that remains has to be of size $p$, not $p+1$, and not $p-1$). Consequently, the upper bound on $\alpha(q)$ increases by a factor of at most $O(q^2)$, a term hidden by the additional $o(1)$ term in the base of the exponent in our upper bound.
\end{proof}

\section{Rainbow matchings in hypergraphs}
\label{sec:rainbow}

In this section, hypergraphs in which the minimum hyperedge size is~2 (as in all previous sections in the paper) are denoted by $G$, whereas hypergraphs in which a hyperedge may also contain only a single vertex are denoted by $H$.

In our correspondence between colored hypergraphs and allocation problems, a vertex represents an item. Each agent $i$ corresponds to a {\em color} $i$, and contributes to the hypergraph hyperedges that correspond to bundles acceptable by $i$ (of value at least $MMS(\items, v_i, n)$), colored by color $i$. A hyperedge may have several copies in the hypergraph, each with a different color (as a bundle may be acceptable for several agents).

A matching in a hypergraph is {\em perfect} if the union of its hyperedges contains all vertices of the hypergraph. For integer $p$, a $p$-matching is a matching with exactly $p$ hyperedges. A {\em rainbow matching} is a matching in which no two hyperedges have the same color.

Consider first an allocation instance with arbitrary valuations. In this case, $\mu(n) < n$ (by Theorem~\ref{thm:main}), and so $m < n$. Consider the hypergraph $H(V, E)$ with $V = \items$. For every agent $i$, include in $E$ those hyperedges that correspond to the nonempty sets in the $MMS(\items, v_i, n)$ partition, and color them by color $i$. Determining $\mu(n)$ for arbitrary valuations can be seen to be equivalent to the following problem.

\begin{question}
\label{qu:arbitrary}
    For a given $n$, what is the largest value of $m$ for which the following holds? Every hypergraph $H$ on $m$ vertices that has at least $n$ perfect matchings, each in a different color, has a perfect rainbow matching.
\end{question}

Our Theorem~\ref{thm:main} shows that the answer to question~\ref{qu:arbitrary} is $m = \log(n) \pm O(1)$.

Moving to monotone valuations over goods, $\mu(n) > n$, and the MMS partition of every agent does not contain empty parts.  Determining $\mu(n)$ for monotone valuations over goods can be seen to be equivalent to the following problem.

\begin{question}
\label{qu:goods}
    For a given $n$, what is the largest value of $m$ for which the following holds? Every hypergraph $H$ on $m$ vertices that has at least $n$ perfect $n$-matchings, each in a different color, has a rainbow $n$-matching.
\end{question}

Note that in Question~\ref{qu:goods}, the rainbow matching need not be perfect. This is because goods that remain can be allocated arbitrarily, preserving the fact that each agent got at least her MMS.

Our Theorem~\ref{thm:main} shows that the answer to question~\ref{qu:goods} is $m = n + \Theta(\log(n))$.

To get tighter bounds on the constant hidden in the $\Theta$ notation, our Lemma~\ref{lem:ncApplication} reduces Question~\ref{qu:goods} to the following question.

\begin{question}
\label{qu:reduction}
    For an integer $c$, what is the smallest value of $n$ for which the following holds for every $1 \le d \le c$? Every hypergraph $G$ on $c+d$ vertices that has at least $n$ $d$-matchings, each in a different color, has a rainbow $d$-matching.
\end{question}

By our naming convention, in the hypergraph $G$ of Question~\ref{qu:reduction}, every hyperedge has size at least~2. Theorem~\ref{thm:submodular} implies that $n \ge (2 -  o(1))^c$, whereas Lemma~\ref{lem:ncUpperBound} implies that $n \le (2.0922 + o(1))^c$.

In the context of chores, it is convenient to introduce a notion of a {\em rainbow cover}. A set of hyperedges forms a rainbow cover if they all have distinct colors and their union contains all vertices.  In a cover, hyperedges are not required to be disjoint. Determining $\mu(n)$ for monotone valuations over chores can be seen to be equivalent to the following problem.

\begin{question}
\label{qu:chores}
    For a given $n$, what is the largest value of $m$ for which the following holds? Every hypergraph $H$ on $m$ vertices that has at least $n$ perfect $n$-matchings, each in a different color, has a rainbow cover.
\end{question}

Theorem~\ref{thm:main} implies that the answer to Question~\ref{qu:chores} is $n + \Theta(\log n)$.

\section{Additive valuations}
\label{sec:additive}

In this section we show that $\mu(n)$ in the case of additive valuations over goods is higher than in the case of arbitrary monotone valuations over goods. 

Similar to~\cite{hummel2023lowerboundsmaximinshare}, we may assume that the allocation instance has additive valuations with {\em same order preferences} (SOP) over the goods, meaning that for every valuation function $v_i$, and for every two items $e_j$ and $e_k$ with $j \le k$, $v_i(e_j) \ge v_i(e_k)$. 

For SOP instances with $n + c$ items, for $c < n$, each of the top $n-c$ items forms a part in the $MMS(\items,v_i,n)$ partition of each agent. Consequently, Lemma~\ref{lem:ncApplication} can easily be replaced by the following lemma.  

\begin{lemma}
    \label{lem:ncApplicationAdditive}
    For every integer $c \ge 1$ and $n_c$ as in Definition~\ref{def:nc}, for every $n \ge n_c + c$, every instance with additive valuations and $n + c$ goods has an MMS allocation.
\end{lemma}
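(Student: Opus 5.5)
We reduce to Definition~\ref{def:nc} by peeling off the top items one at a time, using the same order preferences (SOP) structure. By the reduction of~\cite{bouveret2016characterizing}, we may assume the instance is SOP with items $e_1,\ldots,e_{n+c}$ in non-increasing order of value for every agent. We may also assume $c<n$, since $n \ge n_c + c$ and $n_c \ge 1$.

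First we record the structure of the MMS partitions. For each agent $i$, consider an $MMS(\items,v_i,n)$ partition in which every part containing two or more items consists only of items from the bottom. Standard exchange arguments for additive SOP valuations justify this normalization: whenever a singleton holds an item lower than some item inside a multi-item part, swapping them does not decrease the minimum. Because there are $n+c$ items and $n$ parts, at most $c$ parts are non-singletons, and these contain at most $2c$ items. Hence each of the top $n-c$ items is a singleton part, and so, as the statement preceding the lemma notes, it is acceptable for every agent.

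Next comes the peeling step. Let $d$ be the number of non-singleton parts, which is the same for all agents only after a further normalization. More robustly, proceed as follows. Allocate items $e_1, e_2,\ldots$ one by one to arbitrary distinct agents as long as the current top item is a singleton in the current MMS partition of every remaining agent. Removing a singleton part and one agent does not decrease the MMS of the remaining agents, with respect to the remaining items and one fewer part, so acceptability is preserved. Stop at the first point where the remaining instance has $n'$ agents and $n'+c$ items (the surplus $c$ is unchanged), and some remaining agent's MMS partition has no singleton. By SOP, a singleton part can always be taken to be the top remaining item, so this agent's partition has all parts of size at least $2$. Then $2n' \le n'+c$, that is, $n'\le c$. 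Set $d=n'$, so $1\le d\le c$. We have allocated $n-d$ agents, and $n-d \ge n-c \ge n_c$.

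Finally we apply Definition~\ref{def:nc}. The main obstacle is that Definition~\ref{def:nc} requires every agent in the pool to have MMS partition parts of size at least $2$ on the $c+d$ remaining items, while our stopping rule only guarantees this for one agent. To handle this, we change the order of operations: we do not commit which agents receive top items until the end. With $K$ the bottom $c+d$ items, every one of the original $n$ agents has $MMS(K,v_a,d) \ge MMS(\items,v_a,n)$. The reason is that at least $d$ parts of her normalized partition lie inside $K$, since the top $n-c-d$ items occupy at most that many parts and all are singletons. For an agent whose partition on $K$ does have a singleton, the singleton is the top item of $K$; we then increase the peeling by one and repeat. If instead we choose $d$ minimal so that some agent fails this, then every agent's $MMS(K,v_a,d)$ partition consists of parts of size at least $2$, since a singleton part could be merged downward for SOP additive valuations without loss. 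Applying Definition~\ref{def:nc} to all $n\ge n_c$ agents on $K$ yields $d$ agents who receive acceptable bundles. The remaining $n-d$ agents each receive one of the top $n-d-c\le n-d$ items, which are acceptable for everyone, and any leftover items are assigned arbitrarily. Making the exchange arguments in these last two steps precise is where I expect the care to lie.
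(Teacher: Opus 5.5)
\textbf{There is a genuine gap in the final assembly.} Your first step is fine: the SOP exchange argument shows the top $n-c$ items are singleton parts, hence acceptable to everyone. So is the bound $MMS(K,v_a,d)\ge MMS(\items,v_a,n)$ for $K$ the bottom $c+d$ items, though the reason is that the $n-d$ items outside $K$ (not $n-c-d$) meet at most $n-d$ parts.

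Peeling from the top leaves only $d\le c$ agents in play, far fewer than $n_c$. The inequality $n-d\ge n_c$ is about agents you have already served, not about the pool that Definition~\ref{def:nc} needs.

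Your repair keeps all $n$ agents in the pool and shrinks $K$ until nobody has a singleton, but then the last step fails. The $n-d$ agents not selected by Definition~\ref{def:nc} must be served by all $n-d$ items outside $K$; with your count $n-d-c$, some $c$ agents get nothing. Only the top $n-c$ of those items are acceptable to everyone, and Definition~\ref{def:nc} gives you no control over which $d$ agents it selects.

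Concretely, take $c=2$. Agent $a$ values $e_1,\ldots,e_{n-1}$ at $5$ and $e_n,e_{n+1},e_{n+2}$ at $1$, so her MMS is $3$. Every other agent values $e_1,\ldots,e_{n-2}$ at $2$ and $e_{n-1},\ldots,e_{n+2}$ at $1$, so their MMS is $2$. At $d=2$, agent $a$'s unique MMS partition of the bottom four items contains the singleton $\{e_{n-1}\}$, so the hypothesis of Definition~\ref{def:nc} fails for the full pool. At $d=1$, Definition~\ref{def:nc} may hand the bottom three items to $a$, and then $e_{n-1}$ must go to an agent who values it at $1<2$.

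The sentence about merging a singleton ``downward'' does not help: merging changes the number of parts, so it cannot justify the choice of $d$. Note also that your argument only ever uses $n\ge n_c$, never the extra $c$ in $n\ge n_c+c$.

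\textbf{The fix.} The missing idea is the preliminary greedy step from the proof of Lemma~\ref{lem:ncApplication}, which the paper's remark reuses: remove agents \emph{together with} items of $K$, rather than only shrinking $K$.
Start with $K$ the bottom $2c$ items. While some remaining agent values the top item of the current $K$ at least her $MMS(\items,v_a,n)$, give it to her and delete both. In the example, $e_{n-1}$ goes to $a$.

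If this happens $c$ times, the remaining $n-c$ agents take the top $n-c$ items. Otherwise, after $j<c$ steps:
\begin{itemize}
\item $K'$ is exactly the bottom $c+d$ items, with $d=c-j\in[1,c]$.
\item The pool has $n-j\ge n-c+1\ge n_c$ agents; this is where the hypothesis is used.
\item Every pool agent has $MMS(K',v_a,d)\ge MMS(\items,v_a,n)$.
\item By SOP and the stopping rule, no item of $K'$ reaches that threshold for her, so no $MMS(K',v_a,d)$ partition has a singleton part.
\end{itemize}
Definition~\ref{def:nc} now serves $d$ pool agents. Exactly $n-j-d=n-c$ agents remain, and the top $n-c$ items serve them.
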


Our improvement for $\mu(n)$ is an immediate consequence of the following improved bound, compared to the one in Lemma~\ref{lem:ncUpperBound}

\begin{lemma}
    \label{lem:ncUpperBoundAdditive}
    For additive valuations over goods, the value of $n_c$ in definition~\ref{def:nc} satisfies $n_c \le (1.89 + o(1))^c$.
\end{lemma}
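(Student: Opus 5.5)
We follow the structure of the proof of Lemma~\ref{lem:ncUpperBound}, and use additivity together with the SOP assumption only where the general argument was lossy. That place is the term coming from the class $C_2$, which was bounded through $\alpha(q)$ by $(2.0922+o(1))^c$. The terms for the classes $C_j$ with $j \ge 3$ were already bounded by $(1.91+o(1))^c$ in the proof of Lemma~\ref{lem:hypergraphReduction}. That bound does not yet meet the target of $1.89$, so it must also be sharpened, again using SOP.

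\textbf{Step 1: restrict the admissible partitions.} In the setting of Definition~\ref{def:nc}, items $e_1,\ldots,e_{c+d}$ are ordered by common preference, and each agent has a $d$-partition with all parts of size at least~2. We may assume each agent's MMS partition is \emph{SOP-normalized}. By the standard exchange argument, swapping an item between two parts so that the part with the smaller minimal value is not hurt, we may take partitions in which each part's largest item has index at most $d$. With a further exchange we may ask that the items $e_1,\ldots,e_d$ lie in distinct parts, each part ``headed'' by one top item. Thus each partition is determined by how the $c$ low items $e_{d+1},\ldots,e_{c+d}$ are distributed among the $d$ heads, with each head receiving at least one low item.

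Acceptability is upward closed, and SOP implies more: replacing an item in a bundle by a better-ranked item keeps the bundle acceptable. So ``$T$ is $j$-good'' may be replaced by the stronger notion of some partition having a part \emph{dominated} by $T$, meaning it is coordinatewise worse after sorting. Dominance is far coarser than containment, so the number of bad sets shrinks.

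\textbf{Step 2: recount the bad sets.} We repeat the classification $C_2, C_3,\ldots$ of Lemma~\ref{lem:hypergraphReduction}. The bad parts counted in each $C_j$ now form an antichain in the dominance order, and are described by a head index together with a set of low items. For $C_2$ we redo the free-pair analysis of Lemma~\ref{lem:alpha}. $T_2$ has one head and at most $(c-d)/2+1$ low items, and must not dominate any hyperedge of the fixed matching. Since hyperedges in SOP form are ``head plus tail'', forbidding domination excludes many more subsets than forbidding containment. The generating function $(1+x)^{\cdots}\prod_j((1+x)^{r_j}-x^{r_j})$ then gets replaced by a smaller one. We would count antichains in the product order on $[d]\times 2^{\text{low}}$ via an LYM-type bound: the largest antichain of a product of chains is a middle layer, whose size is a multinomial-type coefficient. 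We then optimize over $\gamma=d/c$ and the saddle-point parameter $x$, as was done numerically in Lemma~\ref{lem:alpha} and Claim~\ref{cl:alpha}.

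\textbf{Main obstacle.} The hardest step is to justify the normalization in Step 1 while keeping the reduction to Definition~\ref{def:nc}. We must check that normalizing MMS partitions does not change which allocations exist, and that the anti-chain and free-pair arguments survive with dominance in place of containment. Dominance is not a set relation, so in the $C_2$ argument we must verify that $T'_2 \preceq T_1$ still lets us reassign bundles. That is fine, because acceptability is monotone under dominance for additive SOP valuations. The final numeric optimization yielding the base $1.89$ is then routine calculus, analogous to the computations behind $2.148$ and $2.0922$.
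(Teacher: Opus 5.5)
There is a genuine gap, and it is in Step~1, on which everything else rests. For additive SOP valuations an MMS $d$-partition need not have one of $e_1,\ldots,e_d$ in each part. It need not even have each part's best item among the top $d$. Take $d=2$, $c=4$ and values $5,5,3,3,3,1$. No subset of $\{3,3,3,1\}$ sums to $5$, so the only partition with both parts worth at least $10$ is $\{e_1,e_2\},\{e_3,e_4,e_5,e_6\}$. Both parts have size at least $2$, and the second is headed by $e_3$. A partition all of whose parts are acceptable is exactly an MMS partition, so no exchange argument can produce a ``head plus low items'' partition here.

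That description is what shrinks the ground set from $c+d$ items to the $c$ low items. It is the only exponential gain in your count for the classes $C_j$, $j\ge 3$, and the basis of your $C_2$ count. Dominance does not replace it. On $k$-subsets of an $m$-element chain the dominance (Gale) order is graded with $k(m-k)+1$ ranks, so its width is at least $\binom{m}{k}/(k(m-k)+1)$. Passing from sets to dominance-antichains therefore saves only a polynomial factor. Also, the bad $j$-th parts need not form an antichain; you would need a Dilworth chain-cover argument instead. Finally, nothing in the proposal actually derives the constant $1.89$; the saddle-point optimization is only asserted.

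The paper gets $1.89$ for $\alpha(q)$ (Claim~\ref{cl:alphaAdditive}) without normalizing partitions. It uses two additive-specific facts:
\begin{itemize}
\item Proposition~\ref{pro:enhancedAntiChain}: the unions $S_i\cup T_i$ also form an anti-chain, so the complements $R_i=V\setminus(S_i\cup T_i)$ give a \emph{third} anti-chain.
\item Proposition~\ref{pro:additiveU}: each $S_i\cup T_i$ meets the set $U$ of the top $p$ items in only $O(\sqrt q)$ elements. This is a robust, approximate substitute for your normalization.
\end{itemize}
After fixing the traces on $U$ ($2^{O(\sqrt q\log q)}$ choices), the sets $S_i\setminus U$, $T_i\setminus U$, $R_i\setminus U$ partition the $q$-element set $V\setminus U$. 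The smallest of the three has at most $q/3$ elements and ranges over an anti-chain. This gives at most $\sum_{j\le q/3}\binom{q}{j}=(2^{H(1/3)}+o(1))^q$ choices, with $2^{H(1/3)}\approx 1.8899$.

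You are right that the $j\ge 3$ term of Lemma~\ref{lem:hypergraphReduction} is too large. The paper states it as $(1.91+o(1))^c$; numerically its maximum over $d$ is about $1.905^c$, above $1.89^c$. The paper's write-up only replaces $\alpha(q)$ and does not revisit that term. However, your fix for it rests on the same false normalization, so in your proposal that term also remains unproved.
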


To prove Lemma~\ref{lem:ncUpperBoundAdditive}, we modify the proof of Lemma~\ref{lem:alpha}.  Recall $\alpha(q)$ from Definition~\ref{def:alpha}, and that Lemma~\ref{lem:hypergraphReduction} states that $n_c \le \alpha(q) + (1.91 + o(1))^c$, for $q = c+2$. One modification is to strengthen the notion of anti-chain used in the third part of Definition~\ref{def:alpha}. 

\begin{proposition}
\label{pro:enhancedAntiChain}
For additive valuations, one can add to Definition~\ref{def:alpha} the condition that the sets $S_i \cup T_i$ also form an anti-chain. Namely, there are no two free pairs $(S_i,T_i)$ and $(S_j,T_j)$ for which $S_j \cup T_j \subseteq S_i \cup T_i$.
\end{proposition}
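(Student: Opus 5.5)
The plan is to revisit the reduction in the proof of Lemma~\ref{lem:hypergraphReduction} and show that, for additive valuations, Assumption~\ref{ass:N} already forces the stronger anti-chain condition on the free pairs it produces. Everything else in the construction stays as it is: the vertex set, the hyperedges from parts $T_3,\ldots,T_d$ of partitions in $C_2$, and the free pairs $(T_1,T_2)$. So it suffices to show the following. If $Q=(T_1,\ldots,T_d)$ and $Q'=(T'_1,\ldots,T'_d)$ are partitions in $C_2$ of two distinct agents $a$ and $b$, then $U':=T'_1\cup T'_2\subseteq U:=T_1\cup T_2$ leads to a violation of Assumption~\ref{ass:N}.

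The key step is a cut-and-choose argument on $U$ for agents $a$ and $b$. Write $\mu_a=MMS(\items,v_a,d)$ and $\mu_b=MMS(\items,v_b,d)$.
\begin{enumerate}
\item Agent $b$ acts as the cutter. She splits $U$ into $B_1=T'_1\cup(U\setminus U')$ and $B_2=T'_2$. By monotonicity each of these bundles has $v_b$-value at least $\mu_b$, since $T'_1,T'_2$ are parts of $b$'s MMS partition.
\item Agent $a$ acts as the chooser and takes the bundle she prefers. By additivity, $v_a(U)=v_a(T_1)+v_a(T_2)\ge 2\mu_a$, so her chosen bundle has value at least $v_a(U)/2\ge\mu_a$.
\item Agent $b$ receives the other bundle, which is acceptable to her by step~1.
\end{enumerate}
Additivity is used only in step~2. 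This is exactly where general monotone valuations fail, since there the better half need not reach $\mu_a$.

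It remains to allocate $\items\setminus U$, which is partitioned into $T_3,\ldots,T_d$. By the definition of $C_2$, each $T_j$ with $j\ge 3$ is $d$-good: at least $d$ partitions in $C$ have a part contained in $T_j$. By monotonicity, the agents owning those partitions find $T_j$ acceptable. Assign $T_3,\ldots,T_d$ greedily to distinct agents other than $a$ and $b$. When we handle $T_j$, at most $2+(j-3)\le d-1$ agents are already excluded, so a new candidate is always available. This yields $d$ distinct agents and a partition of $\items$ into $d$ bundles, each acceptable to its recipient, which contradicts Assumption~\ref{ass:N}. The argument covers the case $U'=U$ as well, which is needed because distinct partitions in $C_2$ may share $T_1\cup T_2$.

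The step needing the most care is to check that this strengthening does not disturb the other conclusions of Lemma~\ref{lem:hypergraphReduction}. In particular, every pair must still be a free pair, and the original anti-chain property must still hold. Both are already established in that proof and do not depend on additivity. With this, $|C_2|$ is bounded by the maximum size of a collection of free pairs satisfying the enhanced condition, as claimed in Proposition~\ref{pro:enhancedAntiChain}.
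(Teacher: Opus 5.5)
Your proof is correct and is essentially the paper's argument: you split $T_1\cup T_2$ between the two agents by a cut-and-choose that relies on additivity, and then hand out $T_3,\ldots,T_d$ greedily using their $d$-goodness. The one difference is that you reverse the roles. The paper lets the agent with the contained pair choose one of the container's parts $T_1,T_2$, via $\max[v_j(S_i),v_j(T_i)]\ge\min[v_j(S_j),v_j(T_j)]$, which uses additivity of the contained agent's valuation; you instead let her cut and apply additivity to the container's valuation, which works equally well.
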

    
For additive valuations, a violation of Proposition~\ref{pro:enhancedAntiChain} implies that $$\max[v_j(S_i), v_j(T_i)] \ge \min[v_j(S_j), v_j(T_j)].$$
The proof of Lemma~\ref{lem:hypergraphReduction} can easily be adapted to imply Proposition~\ref{pro:enhancedAntiChain}. This is because if there is a partition $Q = (T_1, \ldots T_d)$ for agent $i$ that violates Proposition~\ref{pro:enhancedAntiChain} with agent $j$, then agent $j$ necessarily finds either $T_1$ or $T_2$ acceptable, and the partition need not be placed in the class $C_2$.

An additional modification is the following. Recall that in the hypergraph $G(V,E)$ of Definition~\ref{def:alpha}, there are $q + p$ vertices, and each vertex corresponds to an item. {We assume throughout that $p \ge 3$, because for smaller values of $p$, the proofs are simpler and the bounds are stronger (for additive valuations, MMS allocations for $p \le 2$ agents always exist).} As the allocation instance can be assumed to satisfy SOP, there is a set $U\subset V$ of size $|U| = p$ that contains the top $p$ items in the SOP preference order. 

\begin{proposition}
    \label{pro:additiveU}
    For additive valuations and $U$ as above, one can add to Definition~\ref{def:alpha} the condition that $|(S_i \cup T_i) \cap U| \le 2\sqrt{q}$ for every free pair $(S_i,T_i)$.
\end{proposition}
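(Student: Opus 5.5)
We prove Proposition~\ref{pro:additiveU} by going back to the construction in the proof of Lemma~\ref{lem:hypergraphReduction}. There, every free pair $(S_i,T_i)=(T_1,T_2)$ comes from a partition $Q=(T_1,\ldots,T_d)\in C_2$ of some agent $a$, with $d=p+2$. We show that if $|(T_1\cup T_2)\cap U|>2\sqrt{q}$, then $Q$ would not be placed in $C_2$. More precisely, either Assumption~\ref{ass:N} is violated, or $Q$ can be handled by the same crude counting used for the classes $C_j$, $j\ge 3$, whose total contribution is $(1.91+o(1))^c$. So only partitions with few top items in $T_1\cup T_2$ need to be charged to $\alpha(q)$.

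\textbf{Step 1 (where the top items sit).} Under SOP, the $p$ items of $U$ are the most valuable items for every agent. The $d=p+2$ parts of $Q$ contain all $q+p$ items, and each part has size at least~2. Suppose $T_1\cup T_2$ contains $u$ items of $U$. Then the $p$ parts $T_3,\ldots,T_d$ contain only $p-u$ top items, so at least $u$ of these parts contain no item of $U$ at all.

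\textbf{Step 2 (additive exchange).} For additive SOP valuations, the MMS value of agent $a$ is at most the value of any part. Consider a part $T_k$ ($k\ge 3$) with no top item, and the two parts $T_1,T_2$, which together hold $u$ top items. We will try to build an alternative partition for $a$, by swapping a top item from $T_1\cup T_2$ with low items from a $U$-free part $T_k$, that is strictly better than $Q$ or ties with it. This contradicts the uniqueness or maximality of the MMS partition, which we may assume without loss of generality via the same worst-case reduction as for goods.

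A cleaner route, which I would try first, is a size/counting argument. Each $T_j$ with $j\ge3$ is $d$-good and of size at least $2$. Since $|T_1\cup T_2|\le q-p$ and $T_1,T_2$ are independent sets, $T_1$ and $T_2$ must be long when they contain many top items. If $u>2\sqrt q$, then $T_1$ or $T_2$ contains more than $\sqrt q$ top items, so its value exceeds the value of roughly $\sqrt q$ of the $U$-free parts combined. Splitting that heavy part would then produce a set that contains some $T_k$-like acceptable bundle, contradicting independence of $T_1$ or $T_2$.

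\textbf{Step 3 (the $\sqrt q$ threshold).} The threshold $2\sqrt q$ should come from balancing two quantities. If $u$ is large, the $U$-free parts are numerous and each is small in value. Sizes force $\sum_{j\ge3}|T_j|\ge 2p$ while only $q-p$ items lie outside, and this suggests a pigeonhole argument: among the $u$ top items in $T_1\cup T_2$, pairs of them exceed the value of a $U$-free part. Roughly $u^2/4$ comparisons against about $q$ low items would give $u\le 2\sqrt q$.

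\textbf{Main obstacle.} The hard step is this exchange argument: turning ``many top items in $T_1\cup T_2$'' into a genuine contradiction with independence of $T_1,T_2$ or with Assumption~\ref{ass:N}. The key point is to use additivity to show that some acceptable bundle (a $d$-good $T_k$, or the other agent's part) is dominated by a subset of $T_1$ or $T_2$. I expect to need the specific quadratic counting sketched in Step~3, and I am least sure of that step.
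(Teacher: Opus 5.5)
Your Step~1 is correct and matches the first step of the paper's proof: if $u=|(T_1\cup T_2)\cap U|>2\sqrt q$, then at least $u$ of the $p$ parts $T_3,\ldots,T_d$ contain no item of $U$. After that the proposal never reaches a contradiction, and the mechanisms sketched in Steps~2 and~3 do not work.

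\begin{itemize}
\item The claim that more than $\sqrt q$ top items in $T_1$ are worth more than ``roughly $\sqrt q$ of the $U$-free parts combined'' does not follow, and is false in general. SOP only guarantees that each top item is worth at least as much as each item outside $U$. So $\sqrt q$ top items dominate a \emph{single} $U$-free part with at most $\sqrt q$ items, not $\sqrt q$ such parts, which together contain at least $2\sqrt q$ items.
\item Likewise, a pair of top items is guaranteed to dominate a $U$-free part only when that part has at most two items.
\item The contradiction you aim for is the wrong one. Independence of $T_1$ means $T_1$ contains no hyperedge, i.e., none of the specific sets $T'_3,\ldots,T'_d$ coming from partitions in $C_2$. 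A high-value subset of $T_1$ does not produce such a set.
\item Uniqueness of MMS partitions via the worst-case reduction for goods is not available here. That reduction destroys the additivity you want to exploit, and a mere tie does not contradict maximality.
\end{itemize}

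The missing idea is to use the heavy side to bound from below the \emph{size} of every $U$-free part, and then count items outside $U$.
\begin{itemize}
\item Since $u>2\sqrt q$, one of $T_1,T_2$, say $T_1$, contains more than $\sqrt q$ items of $U$.
\item Suppose some $U$-free part $T_k$ had fewer than $\sqrt q$ items. Then for any $e\in T_1$, the set $T_1\setminus\{e\}$ still contains at least $|T_k|$ items of $U$. By additivity and SOP, $v_a(T_1\setminus\{e\})\ge v_a(T_k)\ge MMS$. So an item can be dropped from $T_1$ while keeping it acceptable, which is the contradiction the paper uses.
\item To make this airtight for additive valuations, fix for each agent an MMS partition minimizing $\sum_j|T_j\cap U|^2$. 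Moving a top item $e$ from $T_1$ into $T_k$ keeps every part acceptable and strictly decreases this quantity.
\item Hence each of the at least $u>2\sqrt q$ $U$-free parts has at least $\sqrt q$ items, all from $V\setminus U$. This gives $|V\setminus U|>2q$, whereas $|V\setminus U|=q$.
\end{itemize}
So the $\sqrt q$ threshold comes from (number of $U$-free parts) times (minimum size of each). It does not come from comparing pairs of top items against low items, and no detour through the classes $C_j$, $j\ge 3$, is needed.
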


\begin{proof}
    Assume for the sake of contradiction that $|(S_i \cup T_i) \cap U| > 2\sqrt{q}$. 
    By definition, each partition (and hence also partition $i$) has $p$ parts. By definition, $|U| = p$. Hence, on average, each part of partition $i$ has one item from $U$. For each additional item that a part has from $U$, there must be some part that has no item. This shows that under the assumption, at least $2\sqrt{q}$ parts have no item from $U$. {(More accurately, the number of these parts is at least $\lfloor 2\sqrt{q} \rfloor$, but to keep the presentation simple, we omit the floor notation.)} 
    Each such part must contain at least $\sqrt{q}$ items from $V \setminus U$. This is because otherwise, we can drop an item from either $S_i$ or $T_i$ and still keep the respective part acceptable (at least one of these sets has at least $\sqrt{q} + 1$ items from $U$, and every such item has value at least as large as any item in $V \setminus U$). Thus the total number of items has to be at least $2\sqrt{q} \cdot \sqrt{q} + |S_i| + |T_i| > 2q \ge p+q$, which is a contradiction.
\end{proof}

Using the above two propositions, we can modify the proof of Claim~\ref{cl:alpha} and prove the following claim.

\begin{claim}
\label{cl:alphaAdditive}
For additive valuations,
\[
\alpha(q)\le (1.89+o(1))^q.
\]
\end{claim}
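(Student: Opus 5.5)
We follow the proof of Claim~\ref{cl:alpha}, and bring in the two extra conditions from Propositions~\ref{pro:enhancedAntiChain} and~\ref{pro:additiveU}. Fix a $(q,p)$ hypergraph $G(V,E)$ with $|V|=q+p$, write $p=\gamma q$, and fix an anti-chain of free pairs $(S_i,T_i)$ with $|T_i|\le|S_i|$. As before, $|T_i|\le\frac{q-p}{2}$. Each free pair also satisfies $|(S_i\cup T_i)\cap U|\le 2\sqrt{q}$, where $U$ is the set of the top $p$ items, and the unions $W_i=S_i\cup T_i$ form an anti-chain.

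\textbf{Counting the $T_i$.} The $T_i$ are distinct and have at most $2\sqrt{q}=o(q)$ vertices in $U$. So up to a factor $2^{o(q)}$, each $T_i$ is essentially a subset of the $q$ vertices of $V\setminus U$ of size at most $\frac{q-p}{2}$. This gives
\[
N\le 2^{o(q)}\binom{q}{\frac{1-\gamma}{2}q}=2^{qH\left(\frac{1-\gamma}{2}\right)+o(q)}.
\]
Compared with the bound $\binom{(1+\gamma)q}{\frac{1-\gamma}{2}q}$ of Claim~\ref{cl:alpha}, this effectively removes the $p$ vertices of $U$. For small $\gamma$, however, the bound is still close to $2^q$.

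\textbf{Counting the unions $W_i$.} The $W_i$ also meet $U$ in $o(q)$ vertices. They avoid the $p$ matching hyperedges of the residual graph, which occupy at least $2p$ vertices. Hence $|W_i|\le q-p$, and $W_i$ lies essentially inside $V\setminus U$, a set of $q$ vertices. Each $W_i$ determines its pair up to the choice of $T_i\subseteq W_i$ with $|T_i|\le|W_i|/2$. Using the anti-chain property of the $W_i$ (the LYM inequality), the count of pairs is then at most
\[
\binom{q}{\min(q-p,\,q/2)}\cdot\max_{w}\binom{w}{\le w/2},
\]
to be refined below.

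\textbf{Combining the bounds.} This combination step is the main obstacle. A naive product of the two counts is worse than either one alone. The right quantity is the number of pairs $(W,T)$ with $T\subseteq W\subseteq V\setminus U$ (ignoring $o(q)$ corrections), $|W|=w\le q-p$, $|T|\le w/2$, where the $W$ form an anti-chain and so do the $T$. I would first try to bound this by
\[
\min\left\{\binom{q}{\le\frac{q-p}{2}},\ \binom{q}{w}\, 2^{w}\right\}
\]
and then optimize over $w$ and $\gamma$. When $w$ is small, the factor $\binom{q}{w}2^w$ is small. When $w$ is large, note that $w\le q-p$ forces $p$ small. For small $p$ I would exploit the matching structure further: $V\setminus W_i$ must contain $p$ disjoint hyperedges of size at least~2, which pins down $2p$ vertices outside $W_i$.

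Carrying out the one-parameter numeric optimization over $\gamma$ (and $w$) of the minimum of these entropy expressions should give a base below $1.89$. If it does not, I would also use the generating-function refinement of Lemma~\ref{lem:alpha}, which forbids $T_i$ from containing a matching hyperedge, applied within $V\setminus U$, and optimize over $x$ as there. Finally, by Lemma~\ref{lem:hypergraphReduction}, the additive term $(1.91+o(1))^c$ must also be improved; I expect the same $U$-restriction applied to the classes $C_j$ with $j\ge 3$ to bring it below $1.89^c$.
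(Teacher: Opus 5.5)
\textbf{There is a genuine gap in your combination step.} The bound you plan to optimize, $\min\{\binom{q}{\le (q-p)/2},\binom{q}{w}2^w\}$, never drops below base $2$. Take $\gamma\to 0$ and $w=q-p$. Then $\binom{q}{w}2^w\ge 2^{q-p}$, while $\binom{q}{\le (q-p)/2}=2^{qH((1-\gamma)/2)+o(q)}$. Both terms are $2^{(1-o(1))q}$, so the numeric optimization returns $2$, not $1.89$.

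Your fallback does not help either. The generating-function refinement of Lemma~\ref{lem:alpha} only forbids $T_i$ from containing one of the $p$ matching edges. For $p=\gamma q$ with $\gamma\to 0$, this removes a negligible fraction of the subsets of size at most $q/2$.

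The losses come from two places:
\begin{itemize}
\item the factor ($2^w$, or $\binom{w}{\le w/2}$) for choosing $T_i$ inside $W_i$;
\item the \emph{global} bound $|T_i|\le (q-p)/2$, used in place of a per-pair bound.
\end{itemize}

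\textbf{Both losses can be removed using facts you already have.} The choice factor is spurious. Proposition~\ref{pro:enhancedAntiChain} excludes $W_j\subseteq W_i$ also when $W_j=W_i$, so distinct free pairs have distinct unions, and each $W_i$ determines its pair.

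After fixing the traces on $U$ ($2^{O(\sqrt q\log q)}$ choices), the complements $R_i'=(V\setminus U)\setminus W_i$ are therefore pairwise distinct. The sets $T_i'=T_i\setminus U$ are distinct as well.

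Now bound sizes per pair. The sets $S_i',T_i',R_i'$ partition the $q$-element set $V\setminus U$. So either $|R_i'|\le q/3$, or $|S_i'|+|T_i'|<2q/3$. In the second case, since $|T_i'|\le|S_i'|+O(\sqrt q)$, we get $|T_i'|\le q/3+O(\sqrt q)$.

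Count $R_i'$ in the first class and $T_i'$ in the second. This gives at most $2\cdot 2^{O(\sqrt q\log q)}\binom{q}{\le q/3+O(\sqrt q)}=(2^{H(1/3)}+o(1))^q$, where $2^{H(1/3)}\approx 1.8899$.

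This is exactly the paper's argument. The paper splits into three classes according to which of $S_i',T_i',R_i'$ is smallest and counts that smallest set. In your parametrization by $w=|W_i'|$, the correct quantity is $\max_w\min\{\binom{q}{\le w/2},\binom{q}{q-w}\}$, which balances at $w=2q/3$.

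Finally, your remark about the $(1.91+o(1))^c$ term concerns Lemma~\ref{lem:ncUpperBoundAdditive}, not this claim.
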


\begin{proof}
Let $G(V,E)$ be an $(q,p)$ hypergraph and let $C$ be an anti-chain of free
pairs (as in Definition~\ref{def:alpha}).

For every free pair $(S_i,T_i)\in C$, define $R_i = V\setminus (S_i\cup T_i)$.

By Proposition~\ref{pro:enhancedAntiChain}, the families
$\{S_i\}$, $\{T_i\}$, $\{R_i\}$,
are all anti-chains.

Let $U$ be the set appearing in Proposition~\ref{pro:additiveU}. By that
proposition,
\[
|(S_i\cup T_i)\cap U|=O(\sqrt q).
\]
Consequently, each of the sets
$S_i\cap U$, $T_i\cap U$, 
has size $O(\sqrt q)$.

Hence, the total number of possible triples $(S_i\cap U,\;T_i\cap U,\;R_i\cap U)$ is at most $2^{O(\sqrt q\log q)}$.

Fix one such triple. Since $(S_i,\ T_i,\ R_i)$ partition $V$, fixing two of the three intersections with $U$ determines the
third. Therefore, after fixing the triple, inclusion relations among the sets
$S_i,T_i,R_i$ are completely determined by their intersections with
$V\setminus U$.

Define
\[
S_i' = S_i\setminus U,\qquad
T_i' = T_i\setminus U,\qquad
R_i' = R_i\setminus U.
\]

The sets $(S_i',\ T_i',\ R_i')$ partition $V\setminus U$, and $|V\setminus U|=q$.

Therefore
\[
\min\{|S_i'|,|T_i'|,|R_i'|\}\le \frac q3 .
\]

Partition the family of free pairs into three classes according to which of
$S_i',T_i',R_i'$ attains the minimum size (breaking ties arbitrarily). One of
the three classes contains at least a third of the free pairs.

Fix such a class, and let $X_i'$ denote the corresponding smallest set. Then
$|X_i'|\le \frac q3$ for every member of the class. Moreover, because the corresponding original
family ($\{S_i\}$, $\{T_i\}$, or $\{R_i\}$) is an anti-chain and the
intersections with $U$ have been fixed, the sets $X_i'$ form an anti-chain in $V\setminus U$, {and are all distinct.} Hence, 

\[
|\{X_i'\}|
\le
\sum_{j=0}^{\lfloor q/3\rfloor}
\binom qj .
\]

Using the standard entropy estimate,

\[
\sum_{j=0}^{\lfloor q/3\rfloor}
\binom qj
=
\left(
2^{H(1/3)}+o(1)
\right)^q,
\]

where \(H\) is the binary entropy function. Since $2^{H(1/3)}
=
1.889881\ldots$, we obtain

\[
|\{X_i'\}|
\le
(1.89+o(1))^q.
\]

Multiplying by the three possible choices of class and by the
\(2^{O(\sqrt q\log q)}\) possibilities for the intersections with \(U\), we get

\[
\alpha(q)
\le
3\cdot
2^{O(\sqrt q\log q)}
\cdot
(1.89+o(1))^q
=
(1.89+o(1))^q.
\]

This completes the proof.
\end{proof}

Putting all parts together as in Appendix~\ref{sec:goods}, we get that for additive valuations over goods, $\mu(n) \ge n + \frac{1}{\log 1.89}\log n - o(\log n) \ge n + (1.088 - o(1))\log n$.

\end{appendix}

\end{document}